\newtheorem{prop}{Proposition}
\newtheorem{thm}{Theorem}
\newtheorem{lem}{Lemma}
\theoremstyle{definition}
\newtheorem{defn}{Definition}
\newtheorem{remark}{Remark}
\newcommand{\N}{\mathbb{N}}
\renewcommand{\epsilon}{\varepsilon}
\renewcommand{\tilde}{\widetilde}
\renewcommand{\hat}{\widehat}
\DeclareMathOperator{\poly}{poly}
\mathchardef\mhyphen="2D
\newcommand{\defeq}{\stackrel{\text{def}}{=}}
\author{William M. Hoza\thanks{Department of Computer Science, University of Texas at Austin, \texttt{whoza@utexas.edu}. This research was mostly conducted while the author was an undergraduate student at the California Institute of Technology.} \and Chris Umans\thanks{Department of Computing and Mathematical Sciences, California Institute of Technology, \texttt{umans@cms.caltech.edu}}}
\title{Targeted Pseudorandom Generators, Simulation Advice Generators, and Derandomizing Logspace}
\begin{document}
	\maketitle

	\begin{abstract}
		Assume that for every derandomization result for logspace algorithms, there is a pseudorandom generator strong enough to nearly recover the derandomization by iterating over all seeds and taking a majority vote. We prove under a precise version of this assumption that $\mathbf{BPL} \subseteq \bigcap_{\alpha > 0} \mathbf{DSPACE}(\log^{1 + \alpha} n)$.
		
		We strengthen the theorem to an equivalence by considering two generalizations of the concept of a pseudorandom generator against logspace. A \emph{targeted pseudorandom generator} against logspace takes as input a short uniform random seed \emph{and} a finite automaton; it outputs a long bitstring that looks random to that particular automaton. A \emph{simulation advice generator} for logspace stretches a small uniform random seed into a long advice string; the requirement is that there is some logspace algorithm that, given a finite automaton and this advice string, simulates the automaton reading a long uniform random input. We prove that \[\bigcap_{\alpha > 0} \mathbf{promise\mhyphen BPSPACE}(\log^{1 + \alpha} n) = \bigcap_{\alpha > 0} \mathbf{promise\mhyphen DSPACE}(\log^{1 + \alpha} n)\] if and only if for every targeted pseudorandom generator against logspace, there is a simulation advice generator for logspace with similar parameters.
		
		Finally, we observe that in a certain \emph{uniform} setting (namely, if we only worry about sequences of automata that can be generated in logspace), targeted pseudorandom generators against logspace \emph{can} be transformed into simulation advice generators with similar parameters.
	\end{abstract}
	
	\section{Introduction}
	
	\subsection{Derandomization vs. pseudorandom generators}
	
	The \emph{derandomization} program of complexity theory consists of trying to deterministically simulate whole classes of randomized algorithms without significant loss in efficiency. For example, we would like to prove that $\mathbf{P} = \mathbf{BPP}$, $\mathbf{NP} = \mathbf{AM}$, and $\mathbf{L} = \mathbf{BPL}$. The main strategy for derandomization is to design an efficient \emph{pseudorandom generator}. A natural question is whether this strategy is without loss of generality. That is, does derandomization always imply a pseudorandom generator that is strong enough to recover that very same derandomization? This question appears to have first been investigated by Fortnow \cite{for01}, who gave an oracle separation between pseudorandom generators and derandomization in the $\mathbf{P}$ vs. $\mathbf{BPP}$ setting.
	
	Nevertheless, for both $\mathbf{NP}$ vs. $\mathbf{AM}$ and $\mathbf{P}$ vs. $\mathbf{BPP}$, there are indeed known constructions of pseudorandom generators from derandomization assumptions. Most such constructions come from the \emph{hardness vs. randomness} paradigm. The idea is to show that derandomization assumptions imply \emph{hardness} results (such as circuit lower bounds). There is a large body of literature \cite{yao82, bm84, nw94, iw97, iw98, hill99, kvm02, uma03} showing how, in turn, to construct pseudorandom generators from hardness. Typically, the constructed pseudorandom generator is not strong enough to recover the original derandomization assumption (e.g. \cite{ikw02, ki04, aghk11, kvms12, wil13}) but some results are known that establish exact equivalence between certain sorts of derandomizations and certain sorts of pseudorandom generators (see \cite{avm12}). Goldreich has followed another approach \cite{gol11, gol11b} to construct pseudorandom generators from derandomization assumptions in the $\mathbf{BPP}$ setting. His approach does not directly involve establishing hardness results on the way; instead, he shows how to derandomize the standard nonconstructive existence proof for pseudorandom generators by a reduction to decision problems.
	
	The subject of this paper is $\mathbf{L}$ vs. $\mathbf{BPL}$. In this setting, there are no known constructions of pseudorandom generators from generic derandomization assumptions. Further, the question of whether derandomization is equivalent to pseudorandom generators is especially well-motivated in this setting, because nontrivial derandomizations and pseudorandom generators have been unconditionally constructed -- and there is a significant \emph{gap}. Iterating over all seeds of the best known pseudorandom generator, by Nisan \cite{nis92}, merely proves that $\mathbf{BPL} \subseteq \mathbf{DSPACE}(\log^2 n)$ (which can also be proven by recursive matrix exponentiation). But the best known derandomization, the celebrated Saks-Zhou theorem \cite{sz99}, states that $\mathbf{BPL} \subseteq \mathbf{DSPACE}(\log^{3/2} n)$.
	
	In this work, we show that (informally) \emph{if} for every derandomization of logspace algorithms, there is a pseudorandom generator strong enough to nearly recover the derandomization by iterating over all seeds, then $\mathbf{BPL} \subseteq \bigcap_{\alpha > 0} \mathbf{DSPACE}(\log^{1 + \alpha} n)$. So establishing the \emph{equivalence} of derandomization and pseudorandom generators \emph{would itself} yield a strong derandomization of $\mathbf{BPL}$.
	
	Our result can be viewed pessimistically as showing that it will be challenging to establish equivalence of derandomization and pseudorandom generators in the $\mathbf{BPL}$ setting. But it can also be viewed optimistically as giving a \emph{road map} for proving that $\mathbf{BPL} \subseteq \bigcap_{\alpha > 0} \mathbf{DSPACE}(\log^{1 + \alpha} n)$. From this second viewpoint, our result should be compared to other known results that give interesting sufficient conditions for derandomizing logspace:
	\begin{itemize}
		\item Klivans and van Melkebeek showed \cite{kvm02} that if some language in $\mathbf{DSPACE}(n)$ requires branching programs of size $2^{\Omega(n)}$, then there is a pseudorandom generator strong enough to prove $\mathbf{L} = \mathbf{BPL}$. While interesting, this result does not seem to provide a viable road map for derandomizing logspace, because the strong hardness assumption seems to be far beyond current understanding.
		\item Reingold, Trevisan, and Vadhan showed \cite{rtv06} that if there is an efficient pseudorandom walk generator for \emph{regular} digraphs, then $\mathbf{L} = \mathbf{RL}$. This result \emph{can} be reasonably thought of as giving a road map for derandomizing logspace; the result is particularly tantalizing because in the same work, they actually \emph{did} construct a pseudorandom walk generator for \emph{consistently labeled} regular digraphs. Alas, in the decade since these results were announced, nobody has been able to close the gap.
	\end{itemize}
	We view our result as promising, considering that there are already established techniques for proving equivalence of derandomization and pseudorandom generators. We consider it conceivable that those techniques can be ``ported'' to the $\mathbf{L}$ vs. $\mathbf{BPL}$ setting. The previously mentioned result of \cite{kvm02} may be a first step in that direction. To put it another way, for decades, researchers have been trying to design strong pseudorandom generators for $\mathbf{BPL}$; our result shows that researchers can feel free to \emph{make derandomization assumptions} while trying to design those pseudorandom generators, which could make the task significantly easier.
	
	\subsection{Four types of derandomization}
	
	In fact, our main result is considerably stronger than what we have said so far. To explicate our main result, it is useful to distinguish between \emph{four} types of derandomizations of logspace. (See Figure~\ref{fig:derand}.) First, the most generic type of derandomization is a \emph{simulator} for logspace. This is an algorithm that takes as input a finite automaton $Q$, a start state $q$, and a short uniform seed $x$; it outputs a state $\mathsf{Sim}(Q, q, x)$ whose distribution is close to the distribution of final states that $Q$ would be in were it to read a long uniform random string. (Finite automata provide a simple nonuniform model of space-bounded computation; each state of a $w$-state automaton corresponds to a configuration of a $(\log w)$-space Turing machine.)
	\begin{figure}
		\centering
		\begin{tikzpicture}
			\node[text width=2.5cm, align=center] (a) at (0, 0) {Ordinary pseudorandom generator};
			
			\node[align=center] (b) at (0, 6) {Simulator};
			
			\node[text width=2.5cm, align=center] (c) at (-3, 3) {Targeted pseudorandom generator};
			
			\node[text width=2cm, align=center] (d) at (3, 3) {Simulation advice generator};
			
			\draw[->, >=latex] (a) to (c);
			\draw[->, >=latex] (a) to (d);
			\draw[->, >=latex] (c) to (b);
			\draw[->, >=latex] (d) to (b);
			\draw[->, >=latex, dashed] (c) to (d);
		\end{tikzpicture}
		
		\caption{The four types of derandomization that we consider. A solid arrow from $A$ to $B$ indicates that a derandomization of type $A$ trivially implies a derandomization of type $B$. Our main result is that the implication indicated by the dashed arrow is equivalent to the statement that $\bigcap_{\alpha > 0} \mathbf{promise\mhyphen BPSPACE}(\log^{1 + \alpha} n) = \bigcap_{\alpha > 0} \mathbf{promise\mhyphen DSPACE}(\log^{1 + \alpha} n)$.} \label{fig:derand}
	\end{figure}
	
	The second type of derandomization, which should be familiar, is a \emph{pseudorandom generator} against logspace. A pseudorandom generator has two key features that distinguish it from a generic simulator:
	\begin{itemize}
		\item Input. The pseudorandom generator does not get to see the ``source code'' of the algorithm being simulated, i.e. it does not get $(Q, q)$ as part of its input.
		\item Output. The pseudorandom generator produces a long string for the automaton to read, whereas a simulator merely produces the final state of the automaton.
	\end{itemize}
	The third and fourth types of derandomization that we will consider generalize the concept of a pseudorandom generator by relaxing these two features respectively. The third type of derandomization, a \emph{targeted pseudorandom generator}, gets as input a finite automaton $Q$, a start state $q$, and a short uniform seed $x$; it outputs a long bitstring $\mathsf{Gen}(Q, q, x)$ that looks random to that particular automaton $Q$ when it starts in that particular state $q$. (Goldreich \cite{gol11b} coined the term ``targeted pseudorandom generator'' in the context of $\mathbf{P}$ vs. $\mathbf{BPP}$, where the generator gets a Boolean circuit as its auxiliary input. In the $\mathbf{L}$ vs. $\mathbf{BPL}$ setting, targeted pseudorandom generators have been studied before; see e.g. \cite{nis92b, rr99}.) The fourth type of derandomization, a \emph{simulation advice generator}, stretches a short uniform seed $x$ into a long advice string $\mathsf{Gen}(x)$; the requirement is that there is a deterministic logspace algorithm $\mathsf{S}$ such that $\mathsf{Sim}(Q, q, x) \defeq \mathsf{S}(Q, q, \mathsf{Gen}(x))$ is a simulator for logspace. To the best of our knowledge, we are the first to study simulation advice generators.
	
	Our main result is that
	\begin{equation}
		\bigcap_{\alpha > 0} \mathbf{promise\mhyphen BPSPACE}(\log^{1 + \alpha} n) = \bigcap_{\alpha > 0} \mathbf{promise\mhyphen DSPACE}(\log^{1 + \alpha} n) \label{eqn:derand}
	\end{equation}
	if and only if for every targeted pseudorandom generator against logspace, there is a simulation advice generator with similar parameters. (The precise statement is in Section~\ref{sec:main-result}.) Here, $\mathbf{promise\mhyphen BPSPACE}(s(n))$ is the set of promise problems decidable by probabilistic space-$s(n)$ Turing machines that always halt and that have error probability at most $1/3$; $\mathbf{promise\mhyphen DSPACE}(s(n))$ is its deterministic analog.
	
	Additionally, in Section~\ref{sec:uniform}, we observe that targeted pseudorandom generators against logspace \emph{can} be transformed into simulation advice generators for logspace if we move to the \emph{uniform setting}, i.e. we only worry about sequences of automata that can be generated in logspace. This is almost immediate from the definitions, but it illustrates how much easier it is to construct simulation advice generators than it is to construct pseudorandom generators.
	
	\subsection{Proof techniques} \label{sec:proof-overview}
	
    One direction of our main result is easy. Under the assumption that Equation~\ref{eqn:derand} holds, simulation advice generators are uninteresting objects that can be constructed for trivial reasons. The main content of the theorem is the reverse direction.
	
	The proof of the harder direction is by extending the techniques of Saks and Zhou \cite{sz99}. The way Saks and Zhou originally presented their result is that they used specific properties of Nisan's pseudorandom generator \cite{nis92} to design a space-efficient algorithm for approximate matrix exponentiation by reusing parts of the seed. Later, Armoni \cite{arm98} constructed a pseudorandom generator that is better than Nisan's for fooling low-randomness algorithms, and using Zuckerman's oblivious sampler \cite{zuc97}, he adapted the Saks-Zhou algorithm to use his generator instead of Nisan's, giving a better derandomization of such algorithms.
	
	In Section~\ref{sec:saks-zhou}, we show that with Armoni's ideas, the Saks-Zhou construction can instead be formulated as a \emph{transformation on simulators.} Roughly: Starting from a simulator that uses an $s$-bit seed to simulate $m_0$ steps of a $w$-state automaton, given a parameter $m$, the Saks-Zhou-Armoni (SZA) transformation produces a new simulator that uses an $O\left(s + \frac{(\log m) (\log w)}{\log m_0}\right)$-bit seed to simulate $m$ steps of a $w$-state automaton. We consider this reformulation to be interesting in its own right, as it clarifies the power of Saks-Zhou rounding.
	
	A simple, tempting idea is to start with a weak simulator and apply the SZA transformation $t$ times for some large constant $t$. In iteration $i$, choose $m = 2^{\log^{i/t} w}$. Then we end up with a simulator with $m = w$ (large enough to simulate randomized space-bounded algorithms), and the seed length is only $O(\log^{1 + 1/t} w)$! But unfortunately, the space complexity blows up with each application of the SZA transformation.
	
	Because of the recursive structure of the SZA transformation, the blowup can be avoided as long as the SZA transformation is only applied to simulators obtained from simulation advice generators. So to prove the harder direction of our main result, we cycle between three transformations:
	\begin{enumerate}
		\item Our assumption, which transforms a targeted pseudorandom generator into a simulation advice generator. (This ``transformation'' is not necessarily effective.)
		\item The SZA transformation, which we now think of as transforming a simulation advice generator into a simulator.
		\item A simple transformation based on the method of conditional probabilities, which transforms a simulator into a targeted pseudorandom generator.
	\end{enumerate}
	The SZA transformation substantially increases the number of steps being simulated. For each of the three transformations, we incur only mild degradation in the seed length, space complexity, etc. Hence, overall, each cycle significantly increases the output length of our targeted pseudorandom generator without degrading the other parameters too much. By iterating the cycle a large constant number of times, we end up with a generator strong enough to collapse $\bigcap_{\alpha > 0} \mathbf{promise\mhyphen BPSPACE}(\log^{1 + \alpha} n)$ to $\bigcap_{\alpha > 0} \mathbf{promise\mhyphen DSPACE}(\log^{1 + \alpha} n)$.
	
	\section{Formal statement of main result} \label{sec:main-result}

	Let $[w]$ denote the set $\{1, 2, \dots, w\}$. Let $U_n$ denote the uniform distribution on $\{0, 1\}^n$. For two probability distributions $\mu, \mu'$ on the same measurable space, write $\mu \sim_{\epsilon} \mu'$ to mean that the total variation distance between $\mu$ and $\mu'$ is at most $\epsilon$.

	\begin{defn}
		If $\mathbf{A}$ is a set of functions $\{0, 1\}^m \to [w]$, we say that a function $\mathsf{Sim}: \mathbf{A} \times \{0, 1\}^s \to [w]$ is an \emph{$\epsilon$-simulator} for $\mathbf{A}$ if for every $f \in \mathbf{A}$, we have $\mathsf{Sim}(f, U_s) \sim_{\epsilon} f(U_m)$.
	\end{defn}

	\begin{defn}
		If $\mathbf{A}$ is a set of functions $\{0, 1\}^m \to [w]$, we say that a function $\mathsf{Gen}: \mathbf{A} \times \{0, 1\}^s \to \{0, 1\}^m$ is a \emph{targeted $\epsilon$-pseudorandom generator} against $\mathbf{A}$ if the function $\mathsf{Sim}(f, x) \defeq f(\mathsf{Gen}(f, x))$ is an $\epsilon$-simulator for $\mathbf{A}$.
	\end{defn}
	The standard definition of a pseudorandom generator is the special case where $\mathsf{Gen}(f, x)$ does not depend on $f$.
	\begin{defn}
		A \emph{$(w, d)$-automaton} is a function $Q: [w] \times \{0, 1\}^d \to [w]$. If $Q_1$ is a $(w, d_1)$-automaton and $Q_2$ is a $(w, d_2)$-automaton, then $Q_2 Q_1$ is the $(w, d_1 + d_2)$-automaton defined by
		\[
		(Q_2 Q_1)(q; x, y) = Q_2(Q_1(q; x); y).
		\]
		Let $\mathbf{Q}_{w, d}^m$ be the set of all functions $\{0, 1\}^{md} \to [w]$ of the form $x \mapsto Q^m(q; x)$ where $Q$ is a $(w, d)$-automaton.
	\end{defn}
	
	In words, $\mathbf{Q}_{w, d}^m$ is the set of functions computed by letting a $(w, d)$-automaton run for $m$ steps and observing its final state. An element of $\mathbf{Q}_{w, d}^m$ can be specified by a pair $(Q, q)$, and this is how it will be presented to simulators and targeted pseudorandom generators in our theorem statements.
	
	\begin{defn}
		Suppose that for each $w$, $\mathsf{Gen}_w : \{0, 1\}^s \to \{0, 1\}^a$ is a function, and $\mathbf{A}_w \subseteq \mathbf{Q}_{w, d}^m$, where $s, a, d, m$ are functions of $w$. We say that $\mathsf{Gen}_w$ is\footnote{Strictly speaking, this is a property of the family $\{\mathsf{Gen}_w\}$, not of the individual function. There should be just one $\mathsf{S}$ for the whole family, and $\epsilon$ is a function of $w$.} an \emph{$\epsilon$-simulation advice generator for $\mathbf{A}_w$} if there is some deterministic logspace algorithm $\mathsf{S}$ such that the function $\mathsf{Sim}(Q, q, x) \defeq \mathsf{S}(Q, q, \mathsf{Gen}_w(x))$ is an $\epsilon$-simulator for $\mathbf{A}_w$.
	\end{defn}

	\begin{figure}
		\centering
		\bgroup
		\def\arraystretch{1.1}%
		\begin{tabular}{|r|l|}
			\hline
			Parameter & Interpretation \\
			\hline
			$w$ & Number of states in the automaton \\
			$d$ & Number of bits the automaton reads in each step \\
			$m$ & Number of steps the automaton takes \\
			$\epsilon$ & Simulation error, in total variation distance \\
			$s$ & Seed length \\
			$a$ & Number of advice bits \\
			\hline
		\end{tabular}
		\egroup
		\caption{A summary of the parameters of the targeted pseudorandom generators, simulation advice generators, and simulators that we study. Each family of generators/simulators is indexed by $w$, and the other parameters are functions of $w$.}
	\end{figure}
	
	Note that $\mathsf{S}$'s space bound is logarithmic in terms of its input length, i.e. it may use $O(d + \log w + \log a)$ bits of space. It is desirable for $m$ to be \emph{big} and $s, a, \epsilon$ to be \emph{small}. E.g. as long as $a \leq \poly(w, 2^d)$, it contributes nothing to the asymptotic space complexity of $\mathsf{S}$. To explicate the definition, we give several examples of where simulation advice generators might come from: 
	\begin{enumerate}
		\item Any (standard, non-targeted) $\epsilon$-pseudorandom generator $\mathsf{Gen}_w$ against $\mathbf{Q}_{w, d}^m$ is also an $\epsilon$-simulation advice generator for $\mathbf{Q}_{w, d}^m$. The associated algorithm $\mathsf{S}(Q, q, y)$ computes $Q^m(q; y)$ where $y$ is the output of $\mathsf{Gen}_w$. This can be done in logspace by storing the current state of $Q$ and the current $d$-bit chunk of $y$.
		\item Suppose there is some logspace $\epsilon$-simulator for $\mathbf{Q}_{w, d}^m$ with seed length $s$. Then the identity function on $\{0, 1\}^s$ is an $\epsilon$-simulation advice generator for $\mathbf{Q}_{w, d}^m$. (So under the assumption that $\mathbf{promise\mhyphen L} = \mathbf{promise\mhyphen BPL}$, simulation advice generators are only interesting for extreme values of parameters.)
		\item Suppose $\mathsf{Gen}_w$ is a targeted $\epsilon$-pseudorandom generator against $\mathbf{Q}_{w, d}^m$ of the form $\mathsf{Gen}_w(Q, q, x) = \mathsf{G}(\mathsf{Compress}(Q, q, x), x)$, where $\mathsf{Compress}$ is computable in $O(d + \log w)$ space and outputs $b$ bits. Let $\mathsf{Gen}'_w(x)$ be $x$ concatenated with the truth table $T$ of $\mathsf{G}(\cdot, x)$. Then $\mathsf{Gen}'_w$ is an $\epsilon$-simulation advice generator for $\mathbf{Q}_{w, d}^m$ with output length $a = s + m 2^b$. The associated algorithm $\mathsf{S}(Q, q, x, T)$ computes $c = \mathsf{Compress}(Q, q, x)$, referring to its advice tape for access to $x$. Then, $\mathsf{S}$ looks up the value $y = G(c, x)$ in the $T$ portion of its advice tape and computes $Q^m(q; y)$.
		\item Suppose $\mathsf{Sim}$ is an $\epsilon$-simulator for $\mathbf{Q}_{w, d}^m$ that perhaps uses much more than logspace, but that, each time it reads from $Q$ or $q$, first \emph{erases} all but $O(d + \log w)$ bits. If $c$ is a configuration of $\mathsf{Sim}(Q, q, x)$ in which $\mathsf{Sim}$ just read from $Q$ or $q$, then let $f(c, x)$ be the configuration that $\mathsf{Sim}(Q, q, x)$ will next be in when it is about to read from $Q$ or $q$. Let $\mathsf{Gen}_w(x)$ be the truth table of $f(\cdot, x)$. Then $\mathsf{Gen}_w$ is an $\epsilon$-simulation advice generator for $\mathbf{Q}_{w, d}^m$ with output length $a \leq \poly(w, 2^d)$. The associated algorithm $\mathsf{S}(Q, q, \mathsf{Gen}_w(x))$ simulates $\mathsf{Sim}(Q, q, x)$. To update the simulation's configuration, $\mathsf{S}$ alternates between reading a bit from $(Q, q)$ and using its advice tape.
	\end{enumerate}
    
    Suppose $\{\mathsf{F}_w\}$ is a family where $\mathsf{F}_w$ is a simulator for, a simulation advice generator for, or a targeted pseudorandom generator against $\mathbf{Q}_{w, d}^m$, with seed length $s(w)$. For convenience, we will say that the family is \emph{efficiently computable} if $s(w)$ is space constructible and given $(w, X)$, $\mathsf{F}_w(X)$ can be computed in deterministic space $O(s(w))$. We will often speak of an individual function $\mathsf{F}_w$ being efficiently computable when the family is clear.
	
	We now formally state our main result. In Condition~\ref{cond:targeted-prg-to-advice-generator}, $\eta, \sigma, \mu$ are the parameters of the targeted pseudorandom generator. The last parameter $\gamma$ quantifies the extent to which the derandomization degrades when the targeted pseudorandom generator is replaced with a simulation advice generator.
	
	\begin{samepage}
	\begin{thm} \label{thm:targeted-prg-to-advice-generator}
		The following are equivalent.
		\begin{enumerate}
			\item 	\[
			\bigcap_{\alpha > 0} \mathbf{promise\mhyphen BPSPACE}(\log^{1 + \alpha} n) = \bigcap_{\alpha > 0} \mathbf{promise\mhyphen DSPACE}(\log^{1 + \alpha} n).
			\] \label{cond:derand}
			\item For any constant $\mu \in [0, 1]$, for any sufficiently small constants $\sigma > \eta > 0$, and for any constant $\gamma > 0$, the following holds. Suppose there is a family $\{\mathsf{Gen}_w\}$, where $\mathsf{Gen}_w$ is an efficiently computable targeted $\epsilon$-pseudorandom generator against $\mathbf{Q}_{w, 1}^m$ with seed length $s$, satisfying
			\begin{align*}
				&s \leq O(\log^{1 + \sigma} w), 
				&&\log(1/\epsilon) = \log^{1 + \eta} w,
				&&&\log m \geq \log^{\mu} w.
				&&&& \\
				\intertext{Then there is another family $\{\mathsf{Gen}'_w\}$, where $\mathsf{Gen}'_w$ is an efficiently computable $\epsilon'$-simulation advice generator for $\mathbf{Q}_{w, 1}^{m'}$ with seed length $s'$ and output length $a'$, satisfying}
				&s' \leq O(\log^{1 + \sigma + \gamma} w),
				&&\log(1/\epsilon') = \log^{1 + \eta - \gamma} w,
				&&&\log m' \geq \log^{\mu - \gamma} w,
				&&&&\log a' \leq O(\log^{1 + \eta + \gamma} w).
			\end{align*}
			 \label{cond:targeted-prg-to-advice-generator}
		\end{enumerate}
	\end{thm}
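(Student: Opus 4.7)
The plan is to handle the two directions separately. The easy direction (Condition~\ref{cond:derand} implies Condition~\ref{cond:targeted-prg-to-advice-generator}) does not really use the targeted pseudorandom generator: under the derandomization hypothesis, the BPL algorithm that samples a uniform random walk on the input automaton is already in $\mathbf{DSPACE}(\log^{1+\alpha} w)$ for any constant $\alpha > 0$, and via CDF inversion this becomes a deterministic sampler running in the same space and using $\log^{1+\eta-\gamma} w$ true random bits as the CDF cutoff. Choosing $\alpha$ slightly smaller than $\eta+\gamma$, I then repackage this sampler into a simulation advice generator using the construction of Example~4 above: the advice is the truth table of the sampler's restart function (depending only on the sampler, not on $(Q,q)$), whose size is at most $2^{O(\log^{1+\eta+\gamma} w)}$. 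The bounds on seed length, error, and output length all then follow immediately.

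The hard direction (Condition~\ref{cond:targeted-prg-to-advice-generator} implies Condition~\ref{cond:derand}) follows the three-step cycle outlined in Section~\ref{sec:proof-overview}. Starting from a weak targeted pseudorandom generator (e.g.\ the identity, or any convenient unconditional construction), I iterate the following cycle a constant number of times $t$:
\begin{enumerate}
    \item Invoke the hypothesis of Condition~\ref{cond:targeted-prg-to-advice-generator} to turn the current targeted pseudorandom generator into a simulation advice generator, incurring mild parameter degradation.
    \item Compose the advice generator with its logspace post-processor $\mathsf{S}$ to form a simulator, then apply the Saks--Zhou--Armoni transformation of Section~\ref{sec:saks-zhou}. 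This produces a new simulator whose number of simulated steps $m'$ is much larger than the previous $m$, while its seed length grows only by an additive $O\!\left(\frac{(\log m')(\log w)}{\log m}\right)$.
    \item Convert the longer simulator back into a targeted pseudorandom generator by the method of conditional probabilities: output the pseudorandom bits one at a time, each time using the simulator to estimate the conditional bias of the next bit and fixing it to the majority value.
\end{enumerate}
With a schedule that takes $\log m$ from $\log^{\mu} w$ initially up to $\Omega(\log w)$ after $t$ cycles (e.g.\ $m \approx 2^{\log^{i/t} w}$ in cycle $i$, possibly after a constant number of warmup rounds), the final targeted pseudorandom generator fools $\mathbf{Q}_{w,1}^{\poly(w)}$ using a seed of length $O(\log^{1+\sigma+\gamma} w)$. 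Enumerating seeds then derandomizes any $\mathbf{promise\mhyphen BPSPACE}(\log^{1+\alpha} n)$ computation with $\alpha > \sigma + \gamma$, and letting $\sigma, \gamma$ tend to zero yields Condition~\ref{cond:derand}.

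The main obstacle, and where the bulk of the technical care lies, is controlling the parameter blowup through step~2 of the cycle. The SZA transformation preserves a logspace implementation of the downstream algorithm $\mathsf{S}$ \emph{only} when it is applied to a simulator that arose from a simulation advice generator; otherwise the recursive structure of SZA would multiplicatively compound the space complexity of $\mathsf{S}$ across iterations and blow past the allowed budget almost immediately. This is precisely why the hypothesis of Condition~\ref{cond:targeted-prg-to-advice-generator} must be invoked \emph{inside} each cycle rather than only once at the start, and why the overall proof must cycle among all three transformations instead of any shorter loop. Verifying that each iteration still satisfies the $(\sigma,\eta,\gamma,\mu)$ hypotheses required to re-apply Condition~\ref{cond:targeted-prg-to-advice-generator}, and that after $t$ cycles the accumulated seed length is absorbed by the $\gamma$-slack in Condition~\ref{cond:derand}, constitutes the main bookkeeping burden of the proof.
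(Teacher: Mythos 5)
Your high-level structure matches the paper exactly -- the easy direction builds a simulation advice generator from scratch via the derandomization hypothesis and CDF inversion (the targeted PRG is irrelevant), and the hard direction cycles the three transformations, re-invoking the advice-generator hypothesis inside each cycle to keep the SZA recursion from compounding the space of the post-processor $\mathsf{S}$. But there is a genuine gap in step~3 of your cycle: ``output the pseudorandom bits one at a time, each time using the simulator to estimate the conditional bias of the next bit and fixing it to the majority value.'' This cannot be right. The automaton's input bits are uniformly random by definition, so the ``conditional bias of the next bit'' is identically zero; and in any case, deterministically fixing each bit would produce a single output string per $(Q,q)$, which cannot approximate a high-entropy distribution over final states. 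The missing idea (Lemma~\ref{lem:simulator-to-targeted-prg}) is to first use the simulator on the seed $x$ to \emph{sample a target final state} $R$, and only then run the method of conditional probabilities, greedily choosing each $d$-bit transition to maximize the simulator's estimate of $\Pr[\text{reach } R]$ from the current state. The randomness in $R$ is what makes $\mathsf{G}^{\mathsf{Sim}}(Q,q,\cdot)$ output the correct distribution; without a target, the greedy process is not even well-defined.

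Two smaller inaccuracies. Your suggested base case ``the identity'' has seed length $s = m \geq 2^{\log^{\mu} w}$, wildly exceeding the required $O(\log^{1+\sigma} w)$; the paper seeds the recursion with the INW generator at stretch $\log m = \log^{\beta} w$. And in the easy direction, your restart-function (Example~4) route is workable but subtler than the paper's: since the CDF-inversion sampler uses $\Theta(\log^{1+\eta+\gamma} w)$ space, the restart truth table ranges over configurations of that size, and one must still check that $\mathsf{Gen}'_w$ remains efficiently computable while writing it out. Lemma~\ref{lem:targeted-prg-to-advice-generator-easy-direction} avoids this entirely by letting the advice simply be the seed padded with zeroes to length $2^{O(\log^{1+\eta+\gamma} w)}$: the padding inflates $\mathsf{S}$'s logspace budget (logarithmic in its input length, which includes the advice) to $O(\log^{1+\eta+\gamma} w)$, which is exactly enough for $\mathsf{S}$ to run the deterministic CDF-inversion algorithm directly, with no truth table required.
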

	\end{samepage}
	
	\section{The implicit oracle model} \label{sec:implicit-oracle}

	Toward proving Theorem~\ref{thm:targeted-prg-to-advice-generator}, we introduce a model of space-bounded oracle algorithms that seemingly does not appear in the literature. Our new oracle model (the ``implicit oracle model'') gives a convenient framework for expressing the SZA result as a transformation on simulators and clarifies the effect on the simulator's space complexity when the SZA transformation is iterated.
	
	The implicit oracle model is similar to Wilson's oracle stack model \cite{wil88}, and it is appropriate for the situation where the algorithm doesn't have room to write down the entire query string, but it is ready to provide the oracle with random access to the query string (possibly by making more oracle queries.)
	\begin{defn}
		Fix a set $A \subseteq \{0, 1\}^*$. Giving an algorithm \emph{implicit oracle access} to $A$ allows the algorithm to interface with an oracle in the following ways:
		\begin{itemize}
			\item The algorithm can \emph{invoke} the oracle, which passes control to the oracle.
			\item The oracle can \emph{read position $j \in \N$} by giving $j$ to the algorithm. This passes control back to the algorithm. We associate this read with the most recent unresolved invocation.
			\item The algorithm can give the oracle a \emph{query value $b \in \{0, 1, \bot\}$}. This passes control back to the oracle and \emph{resolves} the most recent unresolved read.
			\item The oracle can give the algorithm a boolean \emph{answer value}. This passes control back to the algorithm and \emph{resolves} the most recent unresolved invocation.
		\end{itemize}
		The oracle is guaranteed to behave as follows: Fix any $x \in \{0, 1\}^*$. Suppose that for some invocation, when the oracle reads position $j$, the algorithm specifies value $x_j$ (where we interpret $x_j = \bot$ for $j > |x|$.) Then the oracle will make finitely many reads and give the answer value corresponding to whether $x \in A$, and every read will be of a position $j \leq |x| + 1$.
	\end{defn}
	We extend the definition by saying that we give an algorithm implicit oracle access to a \emph{function} $f: \{0, 1\}^* \to \{0, 1\}^*$ to mean that we give the algorithm implicit oracle access to the set $A = \{(x, b, 0) : |f(x)| \leq b\} \cup \{(x, b, 1) : f(x)_b = 1\}$.
	
	Wilson's oracle stack model is equivalent to the implicit oracle model with the additional restriction that the oracle is guaranteed to read its input from left to right.
	
	Ultimately, we will only use the implicit oracle model in intermediate steps of our proof; for our final algorithm, we will ``plug in'' actual algorithms in place of the oracle. The next lemma says what happens to space complexity when this actual algorithm is plugged in.
	
	\begin{lem} \label{lem:implicit-oracle}
		Suppose $\mathsf{Gen}: \{0, 1\}^s \to \{0, 1\}^a$ is an efficiently computable $\epsilon$-simulation advice generator for $\mathbf{Q}_{w, d}^m$, and let $\mathsf{Sim}$ be the corresponding simulator. Suppose $\mathsf{Alg}$ is an implicit oracle algorithm and $x$ is an input such that during the execution of $\mathsf{Alg}^{\mathsf{Sim}}(w, x)$, $\mathsf{Alg}$ uses $s'$ bits of space, and at any moment, there are at most $u$ unresolved oracle invocations, and there are at most $v$ unresolved reads of seeds. Then $\mathsf{Alg}^{\mathsf{Sim}}(w, x)$ can be computed (by a non-oracle algorithm) in space $O(s' + s \cdot (v + 1) + u \cdot (d + \log w + \log a))$.
	\end{lem}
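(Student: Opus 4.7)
The plan is to construct a non-oracle simulator $\mathsf{Alg}'$ that faithfully executes $\mathsf{Alg}^{\mathsf{Sim}}$ by inlining every oracle invocation as an explicit run of $\mathsf{S}$, and inlining every read into $\mathsf{Gen}(x)$ as an explicit run of $\mathsf{Gen}$. At the base of the call stack sits $\mathsf{Alg}$ with its $s'$-bit workspace. Whenever $\mathsf{Alg}$ invokes the oracle on an implicit argument $(Q, q, x)$, we push an activation record for a fresh execution of $\mathsf{S}$ on the implicit input $(Q, q, \mathsf{Gen}(x))$. When this $\mathsf{S}$ reads a bit of $Q$ or $q$, we route the read down through the stack to the layer that owns that bit (ultimately $\mathsf{Alg}$), exactly as in the implicit oracle model. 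When $\mathsf{S}$ reads a bit of $\mathsf{Gen}(x)$, we push an activation record for a fresh run of $\mathsf{Gen}$ on seed $x$; when $\mathsf{Gen}$ needs a bit of $x$, we again route the read down the stack to $\mathsf{Alg}$. When $\mathsf{Gen}$ produces its output bit we pop its record and deliver the bit to $\mathsf{S}$; when $\mathsf{S}$ produces its final state we pop its record and deliver the answer to $\mathsf{Alg}$.

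The space accounting follows the shape of this stack. The base contributes $s'$ bits throughout. Every unresolved oracle invocation corresponds to exactly one paused $\mathsf{S}$ activation; since $\mathsf{S}$ is a deterministic logspace algorithm and its input has length $\poly(w, 2^d) + a$, each such activation stores only $O(d + \log w + \log a)$ bits, for a total of $O(u \cdot (d + \log w + \log a))$. Every unresolved read of a seed corresponds to exactly one paused $\mathsf{Gen}$ activation, and each such activation uses only $O(s)$ bits because $\mathsf{Gen}$ is efficiently computable and $s$ is space constructible. Adding at most one further $\mathsf{Gen}$ activation that may be currently running on top of the paused ones gives $O(s(v+1))$ bits for the $\mathsf{Gen}$ frames. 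Summing yields the claimed bound $O(s' + s(v+1) + u(d + \log w + \log a))$.

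The main thing to check carefully is the combinatorics of the stack: that at any instant there are at most $u$ live copies of $\mathsf{S}$ and at most $v+1$ live copies of $\mathsf{Gen}$, never more. For $\mathsf{S}$ this is immediate, since every $\mathsf{S}$ activation is in one-to-one correspondence with an unresolved $\mathsf{Sim}$ invocation in the original algorithm. For $\mathsf{Gen}$ one uses the observation that a $\mathsf{Gen}$ activation lives on the stack only while it is computing a single bit of $\mathsf{Gen}(x)$ in response to the $\mathsf{S}$ directly below it, and therefore it is either currently executing or blocked waiting for one of its own reads of $x$; the latter case is precisely an unresolved read of a seed. Thus the paused $\mathsf{Gen}$ activations are in bijection with unresolved seed reads, and the rest of the argument is routine bookkeeping.
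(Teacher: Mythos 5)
Your proposal is correct and follows essentially the same argument as the paper: simulate $\mathsf{Alg}$ directly, inlining a run of $\mathsf{S}$ for each oracle invocation and a run of $\mathsf{Gen}$ for each seed-bit request, and charge $O(d+\log w+\log a)$ per live $\mathsf{S}$ frame and $O(s)$ per live $\mathsf{Gen}$ frame. Your key counting observation---that every paused $\mathsf{Gen}$ activation is blocked on an unresolved seed read, so at most $v+1$ copies of $\mathsf{Gen}$ are ever live---is exactly the paper's justification for the $s\cdot(v+1)$ term.
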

	
	\begin{proof}
		Recall that $\mathsf{Sim}$ is of the form $\mathsf{Sim}(Q, q, x) = \mathsf{S}(Q, q, \mathsf{Gen}(x))$. Naturally, just simulate $\mathsf{Alg}$, replacing its oracle queries with computations of $\mathsf{Sim}$. The space needed is $s'$ for the computation of $\mathsf{Alg}$, plus $O(d + \log w + \log a)$ for each unresolved execution of $\mathsf{S}$, plus $O(s)$ for each unresolved execution of $\mathsf{Gen}$. The number of unresolved executions of $\mathsf{S}$ is precisely $u$. The number of unresolved executions of $\mathsf{Gen}$ is at most $v + 1$, because while an instance of $\mathsf{Sim}$ is in the process of computing $\mathsf{Gen}$, that instance never queries the $(Q, q)$ portion of its input.
	\end{proof}
	
	\section{The SZA transformation} \label{sec:saks-zhou}
	Formulating the Saks-Zhou construction as a transformation on simulators is not technically challenging. A \emph{$(w, d)$-automaton with fail state}\footnote{This is equivalent to the definition of a ``finite state machine of type $(w, d)$'' in \cite{sz99} or that of a ``$(w, d)$-automaton'' in \cite{ccvm06}.} is a $(w + 1, d)$-automaton such that $Q(w + 1; y) = w + 1$ for all $y$. (We think of $w + 1$ as the ``fail state''.) Let $\tilde{\mathbf{Q}}_{w, d}^m$ be the set of all functions of the form $x \mapsto Q^m(q; x)$ where $Q$ is a $(w, d)$-automaton with fail state. When we give an algorithm (implicit) oracle access to an $\epsilon$-simulator for $\tilde{\mathbf{Q}}_{w, d}^m$ with seed length $s$, it is understood that the algorithm can query for the parameters $w, d, m, \epsilon, s$ as well as interacting with the oracle in the usual way.
	
	\begin{samepage}
		\begin{thm} \label{thm:sza}
			There is a constant $c \in \N$ and a deterministic implicit oracle algorithm $\mathsf{SZA}$ with the following properties. Pick $w \in \N, \epsilon > 0$ and let $d = \lceil c \log(w/\epsilon) \rceil$. Suppose $\mathsf{Sim}$ is an $\epsilon$-simulator for $\tilde{\mathbf{Q}}_{w, d}^{m_0}$ with seed length $s \leq m_0 \leq w$. Then
			\begin{enumerate}
				\item For any $m \in \N$, there is some $m' \geq m$ such that $\mathsf{SZA}_m^{\mathsf{Sim}}$ is a $(12m\epsilon)$-simulator for $\tilde{\mathbf{Q}}_{w, d}^{m'}$. (Here $m$ is an input to $\mathsf{SZA}$; we write it as a subscript merely to separate it from the usual simulator inputs.)
				\item At any moment in the execution of $\mathsf{SZA}_m^{\mathsf{Sim}}$, there are at most $u \defeq \lceil (\log m) / (\log m_0) \rceil$ unresolved oracle invocations, and there is at most one unresolved read of the seed of $\mathsf{Sim}$.
				\item The seed length and space complexity of $\mathsf{SZA}_m^{\mathsf{Sim}}$ are both $O(s + u \log(w/\epsilon))$.
			\end{enumerate}
		\end{thm}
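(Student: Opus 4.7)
My plan is to define $\mathsf{SZA}$ as the Saks-Zhou-Armoni recursive construction, rephrased so that every interaction with the base simulator happens through the implicit oracle for $\mathsf{Sim}$. I will build a family $\mathsf{SZA}_0^{\mathsf{Sim}}, \mathsf{SZA}_1^{\mathsf{Sim}}, \ldots, \mathsf{SZA}_{u-1}^{\mathsf{Sim}}$ with $\mathsf{SZA}_0^{\mathsf{Sim}} = \mathsf{Sim}$, in which $\mathsf{SZA}_k^{\mathsf{Sim}}$ is a simulator for $\tilde{\mathbf{Q}}_{w,d}^{m_0^{k+1}}$ and the external algorithm $\mathsf{SZA}_m$ is just $\mathsf{SZA}_{u-1}$, giving $m' = m_0^u \geq m$. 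The seed of $\mathsf{SZA}_k^{\mathsf{Sim}}$ will be $(y, r_1, \ldots, r_k)$, where $y \in \{0,1\}^s$ is a single base seed \emph{reused across all levels} and each $r_i \in \{0,1\}^{O(\log(w/\epsilon))}$ is a fresh shift for the rounding step at level $i$. This gives the claimed total seed length $O(s + u \log(w/\epsilon))$.

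The inductive step at level $k+1$ on input $(Q, q, y, r_1, \ldots, r_{k+1})$ implicitly constructs a meta-automaton $\bar{Q}^{(k)}$ on $w+1$ states (with fail state) whose transition from a state $q'$ is computed by: (i) invoking $\mathsf{SZA}_k^{\mathsf{Sim}}$ on $(Q, q', y, r_1, \ldots, r_k)$, (ii) adding the shift $r_{k+1}$, and (iii) rounding to a grid of precision $\Theta(\log(w/\epsilon))$, sending any out-of-range outcome to the fail state. This is exactly the Saks-Zhou ``round-and-perturb'' step, and it makes $\bar{Q}^{(k)}$ a deterministic function of its arguments, so it is a legitimate input to $\mathsf{Sim}$. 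The level-$(k+1)$ simulator then invokes $\mathsf{Sim}$ on $(\bar{Q}^{(k)}, q, y)$; whenever $\mathsf{Sim}$ reads a bit of $(\bar{Q}^{(k)}, q)$, the algorithm resolves the read by recursing into $\mathsf{SZA}_k^{\mathsf{Sim}}$. Nothing is ever written out in full, and the constant $c$ in $d = \lceil c\log(w/\epsilon)\rceil$ is chosen large enough to accommodate the rounding precision and the encoding of the meta-transitions.

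The error analysis is Saks and Zhou's, as refined by Armoni to work with an arbitrary simulator rather than Nisan's PRG specifically. Each level contributes $\epsilon$ error from $\mathsf{Sim}$ plus $O(\epsilon)$ slack from the quantize-shift-round step: averaging over $r_{k+1}$ shows that the rounded transition is statistically close to the exact averaged transition, which inherits $\epsilon$-closeness from the simulator property. Critically, after rounding the dependence of the level-$k$ transition on $y$ is reduced to a small amount of residual information, which is what justifies reusing the same $y$ at level $k+1$ without an exponential blowup in error. Over $m' \leq m_0^u$ base steps, a hybrid argument applied level by level gives additive accumulation to $12m\epsilon$. The space bound follows from Lemma~\ref{lem:implicit-oracle} with the nested call chain having depth $u$ (one $\mathsf{Sim}$ invocation per level), with $v = 1$ unresolved seed read at any instant (only the innermost active $\mathsf{Sim}$ can be mid-read of its seed, since all outer $\mathsf{Sim}$'s are blocked on their $(Q,q)$ queries), and with $O(\log(w/\epsilon))$ local workspace per level to store the current state and the shift. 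This sums to $O(s + u\log(w/\epsilon))$ as required.

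The main obstacle is verifying that the Saks-Zhou-Armoni error bookkeeping goes through uniformly for any simulator $\mathsf{Sim}$ satisfying only $\mathsf{Sim}(Q, q, U_s) \sim_\epsilon Q^{m_0}(q; U_{m_0 d})$, rather than for a specific Nisan- or Armoni-style PRG. The key point to check is that the only property of the underlying randomness object used in the original Saks-Zhou proof is precisely the simulator guarantee together with the symmetry of the additive random shift, so the argument can be ported essentially verbatim. Once this is in place, everything else is a careful accounting exercise: matching the recursion depth to $u$, confirming that seed reads do not pile up beyond one at a time, and invoking Lemma~\ref{lem:implicit-oracle} to turn the implicit oracle template into the stated concrete space bound.
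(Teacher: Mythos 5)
There is a genuine gap: you have omitted Armoni's sampler step, and that step is exactly the ingredient that makes the Saks--Zhou seed-reuse trick valid for an \emph{arbitrary} simulator. Your plan fixes a single base seed $y \in \{0,1\}^s$ and reuses it at every level, asserting that the only property needed from the original Saks--Zhou analysis is ``the simulator guarantee together with the symmetry of the additive random shift.'' That is not true. Saks and Zhou relied on a structural property of Nisan's generator: its seed splits as $(x, z)$ with $|z| \leq O(\log(w/\epsilon))$ such that for \emph{every} automaton, for \emph{most} $x$, the restricted generator $\mathsf{NisGen}(x, \cdot)$ is already good for that automaton. The random shift then ensures that, with high probability, the rounding of the approximate transition matrix (built from a good $x$) coincides \emph{exactly} with the rounding of the true averaged matrix, which is what removes the dependence on $x$ and permits reusing $x$ at the next level. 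A black-box $\epsilon$-simulator only guarantees accuracy when averaged over its \emph{entire} seed; it provides no decomposition into a reusable long part and a short part over which one averages, so for your fixed $y$ there is nothing that is ``statistically close to the exact averaged transition,'' and the coincidence argument (Lemmas~\ref{lem:snap-coincide} and~\ref{lem:sza-correctness-snapping} in the paper) has nothing to bite on. The paper's fix is Armoni's observation: precompose the simulator with an averaging sampler $\mathsf{Samp}: \{0,1\}^{\ell} \times \{0,1\}^{d} \to \{0,1\}^{s}$ (Lemma~\ref{lem:samp}) and define $\hat{\mathsf{Pow}}(Q, x)(q; z) = \mathsf{Sim}(Q, q, \mathsf{Samp}(x, z))$; the sampler guarantee is precisely the ``for every automaton, most $x$ are good'' property that your verbatim-porting claim presupposes but does not have.

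The omission also breaks your accounting. In the rounding step one must form entries of the transition probability matrix of the approximate power before snapping; with the sampler this is an average over the $2^{d}$ values of $z$, costing $O(\log(w/\epsilon))$ workspace per level plus one erasable sampler computation (which is why at most one seed read is ever unresolved). In your construction the level-$k$ object is evaluated at the single fixed seed $(y, r_1, \dots, r_k)$, which yields one state rather than a distribution, so there are no probabilities to shift and round; and if you instead average over all $2^{s}$ seeds of $\mathsf{Sim}$ at each level to obtain honest probabilities, you need $O(s)$ workspace per level and the space bound degrades to $O(u s)$ rather than the claimed $O(s + u\log(w/\epsilon))$ (this $O(us)$-type blowup is exactly what the construction is designed to avoid). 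So the sampler is needed both for the correctness of seed reuse and to make the stated seed-length and space bounds come out; without it, the core of the theorem is unproved.
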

	\end{samepage}
	
	To illustrate the theorem statement, we demonstrate how to recover the original Saks-Zhou result of \cite{sz99}. Let $\mathsf{Gen}: \{0, 1\}^s \to \{0, 1\}^{m_0}$ be the (non-targeted) efficiently computable $\epsilon$-pseudorandom generator against $\tilde{\mathbf{Q}}_{w, d}^{m_0}$ of \cite[Theorem 3]{inw94} with $m_0 = 2^{\sqrt{\log w}}$, $\epsilon = 1/(6 \cdot 12w)$, and $s \leq O(\log^{3/2} w)$. Let $\mathsf{Sim}$ be the corresponding simulator. Then $\mathsf{SZA}_w^{\mathsf{Sim}}$ is a $(1/6)$-simulator for $\tilde{\mathbf{Q}}_{w, d}^{m'}$ for some $m' \geq w$, and hence it can be used to simulate $\mathbf{BPL}$ (by ensuring that all transitions from the halting configurations are self loops.) The parameter $u$ is $O(\sqrt{\log w})$, and hence the seed length and space usage of $\mathsf{SZA}_w^{\mathsf{Sim}}$ are both $O(\log^{3/2} w)$. By Lemma~\ref{lem:implicit-oracle}, the space needed to simulate $\mathsf{SZA}_w^{\mathsf{Sim}}$ by a non-oracle algorithm is $O(\log^{3/2} w)$. Iterating over all seeds proves $\mathbf{BPL} \subseteq \mathbf{DSPACE}(\log^{3/2} n)$, since the number of configurations of a logspace Turing machine on a length $n$ input is $w \leq \poly(n)$.
	
	The rest of this section is the proof of Theorem~\ref{thm:sza}. All of the ideas in the proof are already present in \cite{sz99} and \cite{arm98}. Our main contributions in this section are the \emph{formulation and statement} of Theorem~\ref{thm:sza}, which enable us to derive the consequence expressed in Theorem~\ref{thm:targeted-prg-to-advice-generator}.
	
	\subsection{Randomness efficient samplers}
	The first step to proving Theorem~\ref{thm:sza} is an observation by Armoni \cite{arm98}. Let $\mathsf{NisGen}$ denote Nisan's generator. Saks and Zhou used a special feature of $\mathsf{NisGen}$. The special feature is that the seed can be split into two parts $x, z$ with $z \leq O(\log(w/\epsilon))$ such that for any particular automaton $Q$, for most values of $x$, $\mathsf{NisGen}(x, \cdot)$ is a good pseudorandom generator for $Q$. (Namely, we can let $x$ be the sequence of hash functions and $z$ be the input to those hash functions.) Armoni observed that \emph{any} pseudorandom generator can be made to have this feature just by precomposing with an \emph{averaging sampler}. We give here the appropriate notion of averaging samplers for $[w]$-valued functions:
	\begin{defn}
		Fix $\mathsf{Samp}: \{0, 1\}^{\ell} \times \{0, 1\}^d \to \{0, 1\}^s$. For a function $f: \{0, 1\}^s \to [w]$, we say that a string $x \in \{0, 1\}^{\ell}$ is \emph{$\delta$-good for $f$} if $f(\mathsf{Samp}(x, U_d)) \sim_{\delta} f(U_s)$. We say that $\mathsf{Samp}$ is an \emph{averaging $(\delta, \gamma)$-sampler for $[w]$-valued functions} if for every $f: \{0, 1\}^s \to [w]$,
		\[
		\Pr_{x \sim U_{\ell}}[x \text{ is $\delta$-good for $f$}] \geq 1 - \gamma.
		\]
	\end{defn}
	We need a space-efficient averaging sampler with good parameters. Armoni used Zuckerman's averaging sampler \cite{zuc97}, but Zuckerman's sampler breaks down for extremely small values of $\delta$. Therefore, to get a slightly more general result, we use the GUV extractor \cite{guv09}, or rather a space-optimized version by Kane, Nelson, and Woodruff \cite{knw08}. It is standard that extractors are good samplers; the following lemma expresses the parameters achieved by the space-optimized GUV extractor when it is viewed as a sampler for $[w]$-valued functions:
	\begin{lem} \label{lem:samp}
		For all $s, w \in \N$ and all $\delta, \gamma > 0$, there is an averaging $(\delta, \gamma)$-sampler for $[w]$-valued functions $\mathsf{Samp}: \{0, 1\}^{\ell} \times \{0, 1\}^d \to \{0, 1\}^s$ with
		\[
		\ell \leq O(s) + \log(w/\gamma)
		\]
		and
		\[
		d \leq O(\log s + \log w + \log(1/\delta) + \log \log(1/\gamma)),
		\]
		where $\mathsf{Samp}(x, y)$ can be computed in $O(s + \log(w/\gamma))$ space.
	\end{lem}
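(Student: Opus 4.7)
The plan is to instantiate the standard extractor-to-sampler connection with the space-optimized GUV extractor of Kane, Nelson, and Woodruff \cite{knw08}, and then verify that the parameters work out as claimed.

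First I would recall the extractor-to-sampler reduction in the form needed for $[w]$-valued functions. Suppose $E : \{0,1\}^\ell \times \{0,1\}^d \to \{0,1\}^s$ is a $(k, \epsilon')$-extractor, meaning that for every flat source $X$ on $\{0,1\}^\ell$ of min-entropy at least $k$, the distribution $E(X, U_d)$ is $\epsilon'$-close to $U_s$ in total variation distance. I claim that $E$ is then an averaging $(\delta, \gamma)$-sampler for $[w]$-valued functions whenever $\epsilon' \leq \delta/w$ and $\ell - k \geq \log(2w/\gamma)$. To see this, fix any $f : \{0,1\}^s \to [w]$ and for each $v \in [w]$ let $A_v = f^{-1}(v)$. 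For each $v$, define $B_v^{+}$ (resp.\ $B_v^{-}$) to be the set of $x \in \{0,1\}^\ell$ for which $\Pr_y[E(x,y) \in A_v]$ exceeds (resp.\ falls below) $|A_v|/2^s$ by more than $\epsilon'$. If either set had size greater than $2^k$, then the uniform distribution on it would be a flat $k$-source on which $E$ fails to $\epsilon'$-fool the indicator of $A_v$, contradicting the extractor property. Hence $|B_v^+ \cup B_v^-| \leq 2^{k+1}$, and a union bound over $v$ shows that all but at most a $w \cdot 2^{k+1-\ell} \leq \gamma$ fraction of $x$'s are $\epsilon'$-good for every $A_v$ simultaneously; for any such $x$, the triangle inequality gives $f(E(x, U_d)) \sim_{w\epsilon'/2} f(U_s)$, hence $\sim_\delta f(U_s)$.

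Next I would plug in the KNW space-optimized GUV extractor, which achieves min-entropy $k = O(s)$, seed length $O(\log \ell + \log(1/\epsilon'))$, output length $s$, and can be evaluated in $O(\ell)$ space. Setting $\epsilon' = \delta/w$ and choosing $\ell = k + \lceil \log(2w/\gamma) \rceil = O(s) + \log(w/\gamma)$ yields the stated seed length for $\ell$. The extractor seed length becomes
\[
d = O(\log \ell + \log(w/\delta)) = O(\log s + \log\log(w/\gamma) + \log w + \log(1/\delta)),
\]
and since $\log\log(w/\gamma) \leq \log\log w + \log\log(1/\gamma) + O(1) \leq \log w + \log\log(1/\gamma) + O(1)$, this simplifies to the claimed $d \leq O(\log s + \log w + \log(1/\delta) + \log\log(1/\gamma))$. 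The space bound $O(s + \log(w/\gamma)) = O(\ell)$ comes directly from the KNW construction.

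The only step requiring nontrivial care is the union-bound factor of $w$ in converting from extractor error to sampler error: a naive attempt to fool all $2^w$ subsets of $[w]$ would ruin the parameters, so it is important to fool only the $w$ level sets $A_v$ individually and then sum the deviations against total variation distance. This is the one place where the $[w]$-valued setting diverges from the Boolean case, but it is absorbed cleanly by the $\log w$ term in $\log(1/\epsilon')$. Everything else is a direct parameter calculation.
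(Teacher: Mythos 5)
Your proof is correct and follows essentially the same route as the paper: both instantiate the Kane--Nelson--Woodruff space-optimized GUV extractor with error roughly $\delta/w$ and entropy deficiency roughly $\log(w/\gamma)$, then pass from extractor to sampler by handling the $w$ level sets individually, union bounding over them, and converting $\ell_1$ distance to total variation. The only cosmetic difference is that you re-derive the per-level-set goodness claim via the flat-source/bad-set argument, whereas the paper cites Zuckerman's Proposition 2.7 for exactly that step.
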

	
	\begin{proof}
		Let $\ell = 2s + 1 + \log(w/\gamma)$. By \cite[Theorem A.14]{knw08}, there is a $(2s, 2\delta/w)$-extractor $\mathsf{Samp}: \{0, 1\}^{\ell} \times \{0, 1\}^d \to \{0, 1\}^s$ with $d \leq O(\log \ell + \log(w/\delta))$, which is
		\[
		O(\log s + \log w + \log(1/\delta) + \log \log (1/\gamma))
		\]
		as claimed, such that $\mathsf{Samp}(x, y)$ can be computed in $O(\ell + \log(w/\delta))$ space, which is $O(s + \log(w/\delta))$ space as claimed.
		
		All that remains is to prove correctness. Fix $f: \{0, 1\}^s \to [w]$. Say $x \in \{0, 1\}^{\ell}$ is \emph{good for $f$ with respect to $z \in [w]$} if
		\[
			|\Pr[f(\mathsf{Samp}(x, U_d)) = z] - \Pr[f(U_s) = z]| \leq 2\delta/w.
		\]
		By \cite[Proposition 2.7]{zuc97} (or rather its proof), for each $z \in [w]$,
		\[
			\Pr_{x \sim U_{\ell}}[x \text{ is good for $f$ with respect to $z$}] \geq 1 - 2^{-\log(w/\gamma)} = 1 - \gamma/w.
		\]
		Therefore, by the union bound over the $w$ different values of $z$, the probability that a uniform random $x$ is good for $f$ with respect to \emph{every} $z \in [w]$ simultaneously is at least $1 - \gamma$. For such an $x$, the $\ell_1$ distance between $f(\mathsf{Samp}(x, U_d))$ and $f(U_s)$ is at most $2\delta$. Total variation distance is half $\ell_1$ distance, so such an $x$ is $\delta$-good for $f$, completing the proof.
	\end{proof}
	
	\subsection{The snap operation}
    At the heart of the SZA transformation is a randomized rounding operation that we will call $\mathsf{Snap}$. This operation slightly perturbs a given automaton with fail state. The basic feature of this perturbation is that if $Q \approx Q'$, then with high probability, $\mathsf{Snap}(Q) = \mathsf{Snap}(Q')$. This phenomenon (which we will make rigorous in Lemma~\ref{lem:snap-coincide}) is reminiscent of ``snapping to a grid'', hence the name.
	
	A \emph{substochastic $d$-matrix} is a square matrix $M$ filled with nonnegative multiples of $2^{-d}$ such that for every $q$, $\sum_r M_{qr} \leq 1$. A $(w, d)$-automaton with fail state $Q$ has a \emph{transition probability matrix} $\mathcal{M}(Q)$, a $w \times w$ substochastic $d$-matrix defined by
	\[
		\mathcal{M}(Q)_{qr} = \Pr_{z \in \{0, 1\}^d} [Q(q; z) = r].
	\]
	Conversely, from a $w \times w$ substochastic $d$-matrix $M$, we define a \emph{canonical automaton with fail state} $\mathcal{Q}(M)$ by identifying $\{0, 1\}^d$ with $[2^d]$ and setting
	\[
		\mathcal{Q}(M)(q; z) = \begin{cases}
			\text{the smallest $r$ such that } z 2^{-d} \leq \sum_{r' = 1}^r M_{qr'} & \text{if such an $r$ exists} \\
			w + 1 & \text{otherwise.}
		\end{cases}
	\]
	\begin{defn}
		For $p \in [0, 1]$ and $\Delta \in \N$, define $\lfloor p \rfloor_{\Delta} = \lfloor 2^{\Delta} p\rfloor 2^{-\Delta}$, i.e. $p$ truncated to $\Delta$ bits after the radix point. Define $\mathsf{Snap}: [0, 1] \times \{0, 1\}^* \to [0, 1]$ by
		\[
			\mathsf{Snap}(p, y) = \lfloor \max\{0, p - (0.y) \cdot 2^{-|y|}\} \rfloor_{|y|},
		\]
		where $0.y$ represents a number in $[0, 1]$ in binary.\footnote{In the notation of \cite{sz99} and \cite{arm98}, $\mathsf{Snap}(p, y) = \lfloor \Sigma_{(0.y) 2^{-|y|}}(p) \rfloor_{|y|}$. In the notation of \cite{ccvm06}, $\mathsf{Snap}(p, y) = \mathcal{R}_{y, |y|}(p)$.} Extend the definition to operate on matrices componentwise: $\mathsf{Snap}(M, y)_{qr} = \mathsf{Snap}(M_{qr}, y)$. Further extend $\mathsf{Snap}$ to operate on automata with fail states by the rule $\mathsf{Snap}(Q, y) = \mathcal{Q}(\mathsf{Snap}(\mathcal{M}(Q), y))$. (The second argument to $\mathsf{Snap}$ should be thought of as random bits.)
	\end{defn}
	Let $\|\cdot\|$ denote the \emph{matrix norm}, i.e. the maximum sum of absolute entries of any row. Define a metric on automata with fail states with the same number of states by setting $\rho(Q, Q') = \|\mathcal{M}(Q) - \mathcal{M}(Q')\|$. The following lemma relates this metric to total variation distance.
    \begin{lem} \label{lem:total-variation-distance}
        Suppose $Q$ is a $(w, d)$-automaton with fail state and $Q'$ is a $(w, d')$-automaton with fail state. Let $\delta$ be the maximum, over all $q \in [w + 1]$, of the total variation distance between $Q(q; U_d)$ and $Q'(q; U_{d'})$. Then $\frac{1}{2} \rho(Q, Q') \leq \delta \leq \rho(Q, Q')$.
    \end{lem}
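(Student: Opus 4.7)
The plan is to compare the two distributions row-by-row on their common sample space $[w+1]$. Fix a start state $q$: if $q = w+1$, then both $Q(w+1; U_d)$ and $Q'(w+1; U_{d'})$ are point masses at $w+1$ by the ``with fail state'' convention, contributing $0$ to the max defining $\delta$, so the interesting case is $q \in [w]$. I would first unpack $Q(q; U_d)$ as a distribution on $[w+1]$: it assigns mass $\mathcal{M}(Q)_{qr}$ to each $r \in [w]$ and residual mass $1 - \sum_{r \in [w]} \mathcal{M}(Q)_{qr}$ to the fail state $w+1$, and analogously for $Q'$.

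Using the identity that total variation distance equals half the $\ell_1$ distance between probability mass functions, I would then write
\begin{equation*}
2 \cdot \mathrm{TV}\bigl(Q(q; U_d),\, Q'(q; U_{d'})\bigr) = A_q + B_q,
\end{equation*}
where $A_q = \sum_{r \in [w]} |\mathcal{M}(Q)_{qr} - \mathcal{M}(Q')_{qr}|$ is exactly the $q$-th row sum appearing in the definition of $\|\mathcal{M}(Q) - \mathcal{M}(Q')\|$, and $B_q = \bigl|\sum_{r \in [w]} (\mathcal{M}(Q)_{qr} - \mathcal{M}(Q')_{qr})\bigr|$ is the contribution from the fail-state coordinate (arising from the difference of the two residual masses). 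The triangle inequality yields $0 \leq B_q \leq A_q$, and substituting back gives the per-state bound $\tfrac{1}{2} A_q \leq \mathrm{TV}_q \leq A_q$. Taking the maximum over $q \in [w]$ (noting again that $q = w+1$ contributes zero to both sides) translates this directly into $\tfrac{1}{2}\rho(Q,Q') \leq \delta \leq \rho(Q,Q')$.

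There is no real obstacle; the argument is essentially the standard identification of total variation distance with half the $\ell_1$ distance, plus a careful accounting of the implicit fail-state coordinate that does not appear explicitly in $\mathcal{M}$. The one mild subtlety worth checking is that $d$ and $d'$ may differ, but since $Q(q; U_d)$ and $Q'(q; U_{d'})$ are just two probability distributions on the common finite set $[w+1]$, the amount of randomness used to sample each is immaterial to their total variation distance, so the calculation goes through unchanged.
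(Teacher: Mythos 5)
Your argument is correct and is essentially the same as the paper's: both identify $2\,\mathrm{TV}_q$ with the row sum $A_q$ plus the fail-state term $B_q$, observe $B_q \leq A_q$ by the triangle inequality (the paper phrases this as $\sum_r \rho_{qr} = 0$ implying $|\rho_{q,w+1}| \leq \rho(Q,Q')$), and take a maximum over $q$. Your per-state bound $\tfrac{1}{2}A_q \leq \mathrm{TV}_q \leq A_q$ is a slightly cleaner packaging of the paper's rearrangement step, but the underlying calculation is identical.
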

    \begin{proof}
        For each $q, r \in [w + 1]$, let $\rho_{qr} = \Pr[Q(q; U_d) = r] - \Pr[Q'(q; U_{d'}) = r]$. Then $\rho(Q, Q') = \max_{q \in [w]} \sum_{r \in [w]} |\rho_{qr}|$. Since total variation distance is half $L_1$ distance, $\delta = \frac{1}{2} \max_{q \in [w + 1]} \sum_{r \in [w + 1]} |\rho_{qr}|$. This immediately shows that $\frac{1}{2} \rho(Q, Q') \leq \delta$. For the second inequality, let $q$ be such that $\delta = \frac{1}{2} \sum_{r \in [w + 1]} |\rho_{qr}|$. Since $Q$ and $Q'$ are both automata with fail states, $q$ can be chosen to not be $w + 1$, and hence $\rho(Q, Q') \geq \sum_{r \in [w]} |\rho_{qr}| = 2\delta - |\rho_{q, w + 1}|$. Since $\sum_r \rho_{qr} = 0$, $|\rho_{q, w + 1}| \leq \rho(Q, Q')$, so $\rho(Q, Q') \geq 2\delta - \rho(Q, Q')$. Rearranging completes the proof.
    \end{proof}
    
	\begin{lem} \label{lem:snap-closeness}
		For any $(w, d)$-automaton with fail state $Q$ and any $y \in \{0, 1\}^{\Delta}$, $\rho(Q, \mathsf{Snap}(Q, y)) \leq w 2^{-\Delta + 1}$.
	\end{lem}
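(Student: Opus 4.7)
The plan is to reduce the bound on $\rho$ to an entrywise bound on the scalar $\mathsf{Snap}$, and then sum over a single row. First I would verify the identity $\mathcal{M}(\mathsf{Snap}(Q, y)) = \mathsf{Snap}(\mathcal{M}(Q), y)$. The right-hand side is a substochastic $\Delta$-matrix by construction, and for any substochastic $\Delta$-matrix $M$ the canonical automaton satisfies $\mathcal{M}(\mathcal{Q}(M)) = M$: for $r \leq w$, the preimage of $r$ under $\mathcal{Q}(M)(q;\cdot)$ is the set of $z \in [2^\Delta]$ with $2^\Delta \sum_{r' < r} M_{qr'} < z \leq 2^\Delta \sum_{r' \leq r} M_{qr'}$, which has size $2^\Delta M_{qr}$ because both bounds are integers (since $M$ is a $\Delta$-matrix). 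Substituting yields $\rho(Q, \mathsf{Snap}(Q, y)) = \|\mathcal{M}(Q) - \mathsf{Snap}(\mathcal{M}(Q), y)\|$, the maximum row sum of the entrywise difference.

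Next I would establish the scalar estimate $0 \leq p - \mathsf{Snap}(p, y) < 2 \cdot 2^{-\Delta}$ for every $p \in [0, 1]$ and $y \in \{0, 1\}^\Delta$. Writing $p' = \max\{0, p - (0.y) \cdot 2^{-\Delta}\}$ and using $0.y \in [0, 1)$, one sees $p' \in [\max\{p - 2^{-\Delta}, 0\},\, p]$; since $\lfloor \cdot \rfloor_\Delta$ satisfies $q - 2^{-\Delta} < \lfloor q \rfloor_\Delta \leq q$ on nonnegative reals, composing the two bounds yields the claim. The edge case $p < (0.y) \cdot 2^{-\Delta}$ is handled separately: then $p' = 0$, so $\mathsf{Snap}(p, y) = 0$ and the deficit is at most $p < 2^{-\Delta}$.

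Finally I would apply this entrywise bound to each of the $w$ entries of any row of $\mathcal{M}(Q) - \mathsf{Snap}(\mathcal{M}(Q), y)$. Each such entry is nonnegative and strictly less than $2 \cdot 2^{-\Delta}$, so the $\ell_1$ norm of the row is at most $w \cdot 2^{-\Delta + 1}$; taking the maximum over rows gives the stated bound on $\rho$. There is no serious obstacle here --- the lemma is a direct calculation --- but the one step that deserves genuine care is the identity $\mathcal{M}(\mathcal{Q}(M)) = M$, since it crucially uses the matching between the $\Delta$-bit precision of $\mathsf{Snap}$'s output and the $\Delta$-bit index space used by $\mathcal{Q}$, which is what lets the two constructions behave as true inverses on $\Delta$-matrices.
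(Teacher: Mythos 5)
Your proposal is correct and is essentially the paper's argument: the paper's one-line proof is exactly the observation that $\mathsf{Snap}$ perturbs each of the $w$ entries in a row of $\mathcal{M}(Q)$ by at most $2^{-\Delta+1}$, which you spell out via the scalar bound $0 \leq p - \mathsf{Snap}(p,y) < 2^{-\Delta+1}$ and a row sum. Your extra care with the identity $\mathcal{M}(\mathcal{Q}(M)) = M$ for substochastic $\Delta$-matrices is a valid elaboration of a step the paper leaves implicit, not a different route.
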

	\begin{proof}
		The snap operation perturbs each entry of the $w \times w$ matrix by at most $2^{-\Delta + 1}$.
	\end{proof}
	
	\begin{lem} \label{lem:snap-coincide}
		Fix a $(w, d)$-automaton with fail state $Q$ and let $Y \sim U_{\Delta}$. Then
		\[
			\Pr[\exists Q' \text{ such that } \rho(Q, Q') \leq 2^{-2 \Delta} \text{ and yet } \mathsf{Snap}(Q, Y) \neq \mathsf{Snap}(Q', Y)] \leq w^2 2^{-\Delta + 1}.
		\]
	\end{lem}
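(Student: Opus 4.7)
The plan is to reduce the event to an entry-wise statement and then union-bound over the $w^2$ entries of the transition matrix $\mathcal{M}(Q)$. First I would observe that $\rho(Q, Q') \leq 2^{-2\Delta}$ forces $|\mathcal{M}(Q)_{qr} - \mathcal{M}(Q')_{qr}| \leq 2^{-2\Delta}$ for every entry $(q, r)$, because $\rho$ is the maximum row-$\ell_1$-norm of $\mathcal{M}(Q) - \mathcal{M}(Q')$. Moreover $\mathsf{Snap}$ acts componentwise on matrices and $\mathcal{Q}(\cdot)$ is a deterministic function of its matrix, so $\mathsf{Snap}(Q, Y) \neq \mathsf{Snap}(Q', Y)$ only if $\mathsf{Snap}(\mathcal{M}(Q)_{qr}, Y) \neq \mathsf{Snap}(\mathcal{M}(Q')_{qr}, Y)$ for some $(q, r)$. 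Therefore the event in the lemma is contained in $\bigcup_{q, r \in [w]} E_{qr}$, where $E_{qr}$ denotes the event that the function $p' \mapsto \mathsf{Snap}(p', Y)$ fails to be constant on the interval $[\mathcal{M}(Q)_{qr} - 2^{-2\Delta}, \mathcal{M}(Q)_{qr} + 2^{-2\Delta}]$.

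Next, I would bound $\Pr[E_{qr}] \leq 2^{-\Delta + 1}$ per entry. Fix $p = \mathcal{M}(Q)_{qr}$ and set $R = (0.Y) \cdot 2^{-\Delta}$, so that $R$ is uniform on the $2^{\Delta}$ multiples of $2^{-2\Delta}$ lying in $[0, 2^{-\Delta})$. With $Y$ fixed, $p' \mapsto \mathsf{Snap}(p', Y) = \lfloor \max\{0, p' - R\} \rfloor_{\Delta}$ is a right-continuous step function of $p'$ whose jumps are precisely at $p' = R + k \cdot 2^{-\Delta}$ for integers $k \geq 1$; the potential $k = 0$ discontinuity at $p' = R$ is killed by $\max\{0, \cdot\}$, since values on both sides of $R$ floor to $0$. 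So $E_{qr}$ occurs iff some jump point falls in the half-open window $(p - 2^{-2\Delta}, p + 2^{-2\Delta}]$ of length $2 \cdot 2^{-2\Delta}$ (half-open to respect the right-continuity of the floor), equivalently iff $R \in \bigcup_{k \geq 1} (p - k \cdot 2^{-\Delta} - 2^{-2\Delta}, \, p - k \cdot 2^{-\Delta} + 2^{-2\Delta}]$. Consecutive intervals in this union are $2^{-\Delta}$ apart, so at most one meets the support of $R$, and a half-open window of length $2 \cdot 2^{-2\Delta}$ contains exactly two of the equispaced lattice values $R$ can assume. Hence $\Pr[E_{qr}] \leq 2 \cdot 2^{-\Delta}$.

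Finally, a union bound over the $w^2$ entries yields the claimed bound $w^2 \cdot 2^{-\Delta + 1}$. The only step requiring real care is the lattice-point count above: a naive closed-interval bound would give $3 \cdot 2^{-\Delta}$ per entry, and the tighter factor of $2$ demanded by the lemma is obtained by using the half-open interval dictated by the right-continuity of $\lfloor \cdot \rfloor_{\Delta}$. Everything else is routine bookkeeping.
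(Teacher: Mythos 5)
Your overall strategy is the same as the paper's: reduce to the $w^2$ matrix entries (using that $\rho$ dominates each entrywise difference and that $\mathsf{Snap}$ acts componentwise while $\mathcal{Q}(\cdot)$ is deterministic), bound the per-entry bad probability by $2^{-\Delta+1}$ over the random shift $R=(0.Y)\cdot 2^{-\Delta}$, and union bound. The entrywise reduction and your description of $p'\mapsto\lfloor\max\{0,p'-R\}\rfloor_{\Delta}$ as a right-continuous step function with jumps exactly at $R+k2^{-\Delta}$, $k\geq 1$, are correct, as is the equivalence with $R\in\bigcup_{k\geq 1}(p-k2^{-\Delta}-2^{-2\Delta},\,p-k2^{-\Delta}+2^{-2\Delta}]$.

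However, the step ``consecutive intervals are $2^{-\Delta}$ apart, so at most one meets the support of $R$'' is false: the support of $R$ spans almost the full period $[0,2^{-\Delta})$, so two consecutive intervals can catch it at opposite ends. Concretely, take $\Delta=2$ and $p=7/16$: the $k=1$ interval $(1/8,\,1/4]$ contains the support point $3/16$, while the $k=2$ interval $(-1/8,\,0]$ contains the support point $0$. As written, your argument then only gives up to two bad values of $R$ \emph{per} interval, i.e.\ $\Pr[E_{qr}]\leq 4\cdot 2^{-\Delta}$ and a final bound of $w^2 2^{-\Delta+2}$, a factor $2$ worse than the lemma (and than what the downstream constant $12m\epsilon$ uses). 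The count of two is nonetheless correct, and the fix is short: map each bad $R$ to a witnessing jump point $R+k2^{-\Delta}$, which is a multiple of $2^{-2\Delta}$ lying in the single half-open window $(p-2^{-2\Delta},\,p+2^{-2\Delta}]$; this map is injective since distinct $R,R'\in[0,2^{-\Delta})$ cannot differ by a multiple of $2^{-\Delta}$, and that window contains at most two multiples of $2^{-2\Delta}$. (Equivalently, reduce the union of intervals modulo $2^{-\Delta}$: it becomes an arc of length $2\cdot 2^{-2\Delta}$, containing at most two support points.) With this replacement your proof coincides with the paper's, which simply notes that $p-R$ must land within $2^{-2\Delta}$ of a multiple of $2^{-\Delta}$ and that only two values of $Y$ can cause this.
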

	
	\begin{proof}
		Let $E_{qr}$ be the bad event that there exists $p$ such that $|\mathcal{M}_{qr} - p| \leq 2^{-2\Delta}$ and yet $\mathsf{Snap}(\mathcal{M}(Q)_{qr}, Y) \neq \mathsf{Snap}(p, Y)$. For $E_{qr}$ to occur, there must be some $x$ a multiple of $2^{-\Delta}$ such that $\mathcal{M}(Q)_{qr} - (0.Y) \cdot 2^{-\Delta}$ is in $[x - 2^{-2\Delta}, x + 2^{-2\Delta})$. There are only two values of $Y$ that can make this happen, so $\Pr[E_{qr}] \leq 2^{-\Delta + 1}$. The union bound completes the proof, since $\|M\| \geq \max_{q, r} |M_{qr}|$.
	\end{proof}

	\subsection{The construction}
	Recall that $w$ is the number of states (excluding the fail state), $\epsilon$ is the error of $\mathsf{Sim}$, and $s$ is the seed length of $\mathsf{Sim}$. Let $\Delta = \lceil \log(w^2/\epsilon)\rceil$, let $\delta = 2^{-2\Delta - 1}$, and let $\gamma = 2\epsilon/w$. Let $\mathsf{Samp}: \{0, 1\}^{\ell} \times \{0, 1\}^d \to \{0, 1\}^s$ be the averaging $(\delta, \gamma)$-sampler for $[w]$-valued functions of Lemma~\ref{lem:samp}. (This defines the constant $c$; note that Lemma~\ref{lem:samp} ensures $d \leq O(\log(w/\epsilon))$, since the theorem statement assumes $s \leq w$.)
	
	We now define a randomized approximate automaton powering operation $\hat{\mathsf{Pow}}$. For a $(w, d)$-automaton with fail state $Q$ and a string $x \in \{0, 1\}^{\ell}$, we define a $(w, d)$-automaton with fail state $\hat{\mathsf{Pow}}(Q, x)$ by the formula
	\[
		\hat{\mathsf{Pow}}(Q, x)(q; z) = \mathsf{Sim}(Q, q, \mathsf{Samp}(x, z)).
	\]
	Recall that $m_0$ is the number of steps simulated by $\mathsf{Sim}$, and note that for any $Q$, for most $x$, $\hat{\mathsf{Pow}}(Q, x) \approx Q^{m_0}$. The idea of the $\mathsf{SZA}$ transformation is to alternately apply $\hat{\mathsf{Pow}}$ and $\mathsf{Snap}$. The $\mathsf{Snap}$ operation allows us to reuse the randomness of the $\hat{\mathsf{Pow}}$ operation from one application to the next, thereby saving random bits.
	
	Let $Q_0$ be the $(w, d)$-automaton with fail state that is given to $\mathsf{SZA}$ as input. Recall that $u = \lceil (\log m) / (\log m_0) \rceil$, where $m$ is the number of steps of $Q_0$ that $\mathsf{SZA}$ is trying to simulate. For a sequence $y = (y_1, \dots, y_u) \in \{0, 1\}^{\Delta u}$ and a string $x \in \{0, 1\}^{\ell}$, we define a sequence of $(w, d)$-automata with fail states $\hat{Q}_0[x, y], \dots, \hat{Q}_u[x, y]$ by starting with $\hat{Q}_0[x, y] = Q_0$ and setting
	\[
		\hat{Q}_{i + 1}[x, y] = \mathsf{Snap}(\hat{\mathsf{Pow}}(\hat{Q}_i[x, y], x), y_{i + 1}).
	\]
	(For $i \geq 1$, $Q_i$ is naturally thought of as a $(w, \Delta)$-automaton with fail state, but since $\Delta \leq d$, we can think of it as reading $d$ bits for each transition and ignoring all but the first $\Delta$ of them.) Finally, for seed values $x \in \{0, 1\}^{\ell}, y \in \{0, 1\}^{\Delta u}, z \in \{0, 1\}^d$, we set \[
		\mathsf{SZA}_m^{\mathsf{Sim}}(Q_0, q, x, y, z) := \hat{Q}_u[x, y](q; z).
	\]
	
	\subsection{Correctness}
	The bulk of the correctness proof consists of justifying the fact that we use the same $x$ value for each application of $\hat{\mathsf{Pow}}$ in the definition of $\hat{Q}_i$. To do this, we define a \emph{deterministic} approximate powering operation $\mathsf{Pow}$. For a $(w, d)$-automaton with fail state $Q$, define a $(w, s)$-automaton with fail state $\mathsf{Pow}(Q)$ by
	\[
		\mathsf{Pow}(Q)(q; z) = \mathsf{Sim}(Q, q, z).
	\]
	Note that $\mathsf{Pow}(Q) \approx Q^{m_0}$. For a sequence $y = (y_1, \dots, y_u) \in \{0, 1\}^{\Delta u}$, define (just for the analysis) another sequence of $(w, d)$-automata with fail states $Q_0[y], \dots, Q_u[y]$ by starting with $Q_0[y] = Q_0$ and setting
	\[
		Q_{i + 1}[y] = \mathsf{Snap}(\mathsf{Pow}(Q_i[y]), y_{i + 1}).
	\]
	We first verify that these automata $Q_i$ (always) provide good approximations for the true powers of $Q_0$:
	
	\begin{lem} \label{lem:sza-correctness-closeness}
		For any $y$, $\rho(Q_u[y], Q_0^{m_0^u}) \leq 8m\epsilon$.
	\end{lem}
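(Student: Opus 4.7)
The plan is to bound $\alpha_i \defeq \rho(Q_i[y], Q_0^{m_0^i})$ by induction on $i$, deriving a recurrence of the form $\alpha_{i+1} \leq m_0 \alpha_i + O(\epsilon)$ from the triangle inequality and then unrolling it.

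I would first assemble three ingredients. First, the snap step perturbs little: Lemma~\ref{lem:snap-closeness} gives $\rho(\mathsf{Snap}(Q, y_{i+1}), Q) \leq w \cdot 2^{-\Delta+1}$, and the choice $\Delta = \lceil \log(w^2/\epsilon) \rceil$ makes this at most $2\epsilon/w$. Second, $\mathsf{Pow}$ is a good approximation to true $m_0$-step transition: since $\mathsf{Sim}$ is an $\epsilon$-simulator, for every start state $q$ the distributions $\mathsf{Pow}(Q)(q; U_s) = \mathsf{Sim}(Q, q, U_s)$ and $Q^{m_0}(q; U_{m_0 d})$ lie within total variation distance $\epsilon$, so Lemma~\ref{lem:total-variation-distance} yields $\rho(\mathsf{Pow}(Q), Q^{m_0}) \leq 2\epsilon$. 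Third, powering is Lipschitz with respect to $\rho$: because $\mathcal{M}(A^k) = \mathcal{M}(A)^k$, the telescoping identity
\[
\mathcal{M}(A)^k - \mathcal{M}(B)^k = \sum_{j=0}^{k-1} \mathcal{M}(B)^j \bigl(\mathcal{M}(A) - \mathcal{M}(B)\bigr) \mathcal{M}(A)^{k-1-j},
\]
together with submultiplicativity of the maximum-absolute-row-sum norm $\|\cdot\|$ and the fact that transition matrices of automata with fail states are substochastic (hence have $\|\cdot\| \leq 1$), gives $\rho(A^k, B^k) \leq k \rho(A, B)$.

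Now I combine these ingredients via two uses of the triangle inequality, inserting the intermediate automata $\mathsf{Pow}(Q_i[y])$ and $Q_i[y]^{m_0}$ and writing $Q_0^{m_0^{i+1}} = (Q_0^{m_0^i})^{m_0}$, to obtain
\[
\alpha_{i+1} \leq \frac{2\epsilon}{w} + 2\epsilon + m_0 \alpha_i \leq 4\epsilon + m_0 \alpha_i.
\]
Starting from $\alpha_0 = 0$ and iterating yields $\alpha_u \leq 4\epsilon \sum_{j=0}^{u-1} m_0^j$. For $m_0 \geq 2$ (the only case of interest, since $m_0 = 1$ makes $u$ ill-defined), the geometric sum is at most $2 m_0^{u-1}$, and by the definition $u = \lceil (\log m)/(\log m_0) \rceil$ one has $m_0^{u-1} \leq m$. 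Hence $\alpha_u \leq 8 m \epsilon$, as required.

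I do not anticipate a serious obstacle: the entire argument is a standard hybrid analysis forced by the definitions. The only mildly delicate ingredient is the Lipschitz bound $\rho(A^k, B^k) \leq k \rho(A, B)$, but this is classical linear algebra applied to the substochastic transition-matrix formalism already introduced in the excerpt.
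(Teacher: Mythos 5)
Your proof is correct and takes essentially the same route as the paper's: the same triangle-inequality decomposition through the intermediate automata $\mathsf{Pow}(Q_i[y])$ and $Q_i[y]^{m_0}$, the same per-term bounds (snap perturbation, simulator guarantee via Lemma~\ref{lem:total-variation-distance}, and the Lipschitz property of powering), and the same geometric-sum accounting giving $8m\epsilon$. The only cosmetic difference is that you reprove the bound $\rho(A^{m_0},B^{m_0}) \leq m_0\,\rho(A,B)$ by telescoping, whereas the paper simply cites Proposition~2.3 of Saks--Zhou, and you unroll the recurrence explicitly rather than stating the closed-form induction hypothesis.
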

	
	\begin{proof}
		We show by induction on $i$ that
		\[
			\rho(Q_i[y], Q_0^{m_0^i}) \leq \frac{m_0^i - 1}{m_0 - 1} \cdot (2\epsilon + w 2^{-\Delta + 1}).
		\]
		In the base case $i = 0$, this is immediate. For the inductive step, by the triangle inequality,
		\[
			\rho(Q_{i + 1}[y], Q_0^{m_0^{i + 1}}) \leq \rho(Q_{i + 1}[y], \mathsf{Pow}(Q_i[y])) + \rho(\mathsf{Pow}(Q_i[y]), Q_i[y]^{m_0}) + \rho(Q_i[y]^{m_0}, Q_0^{m_0^{i + 1}}).
		\]
		The first term is at most $w 2^{-\Delta + 1}$ by Lemma~\ref{lem:snap-closeness}. The second term is at most $2\epsilon$ by the simulator guarantee and Lemma~\ref{lem:total-variation-distance}. The third term is at most $m_0 \rho(Q_i[y], Q^{m_0^i})$ by \cite[Proposition 2.3]{sz99}. Therefore, by the inductive assumption,
		\begin{align*}
			\rho(Q_{i + 1}[y], Q_0^{m_0^{i + 1}}) &\leq w 2^{-\Delta + 1} + 2\epsilon + m_0 \cdot \frac{m_0^i - 1}{m_0 - 1} \cdot (2\epsilon + w 2^{-\Delta + 1}) \\
			&= \frac{m_0^{i + 1} - 1}{m_0 - 1} \cdot (2\epsilon + w 2^{-\Delta + 1}).
		\end{align*}
		That completes the induction. Finally, we plug in $i = u$:
		\[
			\rho(Q_u[y], Q_0^{m_0^u}) \leq \frac{m_0^u - 1}{m_0 - 1}(2\epsilon + 2w 2^{-\Delta}) \leq 2m \cdot (2\epsilon + 2\epsilon). \qedhere
		\]
	\end{proof}
	
	Now, we show that the $\mathsf{Snap}$ operation ensures that with high probability, $\hat{Q}_i$ and $Q_i$ are exactly equal, despite their different definitions:
	
	\begin{lem} \label{lem:sza-correctness-snapping}
		Let $X \sim U_{\ell}, Y_1 \sim U_{\Delta}, \dots, Y_u \sim U_{\Delta}$ all be independent. Then
		\[
			\Pr[\text{there is some $i \leq u$ such that } \hat{Q}_i[X, Y] \neq Q_i[Y]] \leq 4m\epsilon.
		\]
	\end{lem}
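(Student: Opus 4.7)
The plan is a union bound over the $u$ snapping steps, each step contributing $O(\epsilon)$ failure probability. Let $E_i$ denote the event $\hat Q_i[X,Y] = Q_i[Y]$; since $E_0$ is automatic, it suffices to bound $\sum_{i=1}^u \Pr[E_{i-1} \wedge E_i^c]$. Fix a step $i$ and condition on $X, Y_1, \ldots, Y_{i-1}$, which determines $Q_{i-1}[Y]$, $\mathsf{Pow}(Q_{i-1}[Y])$, and $\hat{\mathsf{Pow}}(Q_{i-1}[Y], X)$, and leaves $Y_i$ uniform and independent. On $E_{i-1}$, the failure $E_i^c$ reduces to the event $\mathsf{Snap}(\hat{\mathsf{Pow}}(Q_{i-1}[Y], X), Y_i) \neq \mathsf{Snap}(\mathsf{Pow}(Q_{i-1}[Y]), Y_i)$, which I split by introducing the ``good-sample'' event $A_i$: $X$ is $\delta$-good for each row-function $z \mapsto \mathsf{Sim}(Q_{i-1}[Y], q, z)$ for $q \in [w+1]$.

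For the bound on $A_i^c$, I would apply Lemma~\ref{lem:samp} (the averaging-sampler guarantee) to each of the at most $w+1$ row-functions and union bound: $\Pr[A_i^c \mid Y_1,\ldots,Y_{i-1}] \leq (w+1)\gamma \leq 2\epsilon$, using $\gamma = 2\epsilon/w$. On $A_i$, the total-variation guarantee doubles to $\rho(\hat{\mathsf{Pow}}(Q_{i-1}[Y], X), \mathsf{Pow}(Q_{i-1}[Y])) \leq 2\delta = 2^{-2\Delta}$. Now I invoke Lemma~\ref{lem:snap-coincide} with $Q = \mathsf{Pow}(Q_{i-1}[Y])$, using that $Y_i$ is uniform and independent of the conditioning: the existentially quantified $Q'$ is witnessed by $\hat{\mathsf{Pow}}(Q_{i-1}[Y], X)$, so the snap-collision probability is $\leq w^2 \cdot 2^{-\Delta + 1} \leq 2\epsilon$ thanks to $\Delta = \lceil \log(w^2/\epsilon)\rceil$.

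Combining, each term in the sum is at most $4\epsilon$, so $\sum_{i=1}^u 4\epsilon = 4u\epsilon \leq 4m\epsilon$ using $u = \lceil (\log m)/(\log m_0)\rceil \leq m$ (the latter being immediate since $m_0 \geq 2$). No step is deeply difficult, but the one delicate point is keeping the conditioning straight so that Lemma~\ref{lem:snap-coincide} can be applied with fresh randomness $Y_i$ over the fixed $Q = \mathsf{Pow}(Q_{i-1}[Y])$; the existential form of that lemma is what lets us absorb the $X$-dependence of $\hat{\mathsf{Pow}}$ once $A_i$ has been secured. A minor technicality is the fail-state row of the pow-automata, which is only approximately absorbing under a generic $\mathsf{Sim}$; I would either absorb this into the constant factor or adopt the convention that $\mathsf{Sim}(Q, w+1, \cdot) \equiv w+1$, which costs nothing and makes the fail-state row contribute zero to every $L_1$ comparison.
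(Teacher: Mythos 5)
Your proof is correct and matches the paper's argument almost exactly: the paper also union bounds $u$ sampler-failure events (via Lemma~\ref{lem:samp} together with Lemma~\ref{lem:total-variation-distance}) and $u$ snap-collision events (via Lemma~\ref{lem:snap-coincide}), each bounded by $2\epsilon$, then closes with the induction that your telescoping sum $\sum_i \Pr[E_{i-1} \wedge E_i^c]$ reproduces in a conditioned form. One small arithmetic slip: with $\gamma = 2\epsilon/w$ you have $(w+1)\gamma = 2\epsilon(1 + 1/w) > 2\epsilon$, so you must union bound over only the $w$ non-fail rows as the paper does; your suggested convention $\mathsf{Sim}(Q, w+1, \cdot) \equiv w+1$ fixes this for free (and is in fact needed anyway, and glossed over by the paper, for $\hat{\mathsf{Pow}}(Q, x)$ to be an automaton with fail state so that Lemma~\ref{lem:total-variation-distance} applies).
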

	
	\begin{proof}
		By the sampling property, Lemma~\ref{lem:total-variation-distance}, and a union bound over the $w$ different start states, for each $i \in \{0, \dots, u - 1\}$,
		\begin{equation} \label{eqn:bad-event-1}
			\Pr[\rho(\mathsf{Pow}(Q_i[Y]), \hat{\mathsf{Pow}}(Q_i[Y], X)) > 2\delta] \leq w \gamma = 2\epsilon.
		\end{equation}
		(Imagine picking $Y$ first and then taking a probability over the randomness of $X$ alone.) Now, $2\delta = 2^{-2\Delta}$, and by Lemma~\ref{lem:snap-coincide},
		\begin{align}
			\Pr\begin{bmatrix}\exists Q' \text{ such that } \rho(\mathsf{Pow}(Q_i[Y]), Q') \leq 2^{-2\Delta} \\ \text{ and } \mathsf{Snap}(\mathsf{Pow}(Q_i[Y]), Y_{i + 1}) \neq \mathsf{Snap}(Q', Y_{i + 1})\end{bmatrix} &\leq w^2 2^{-\Delta + 1} \label{eqn:bad-event-2} \\
			&\leq 2\epsilon.
		\end{align}
		By the union bound over the $u$ different values of $i$, the probability that \emph{any} of these bad events occur is at most $u(2\epsilon + 2\epsilon) \leq 4m\epsilon$. So to prove the lemma, assume that \emph{none} of these bad events occur. In this case, we show by induction that $\hat{Q}_i[X, Y] = Q_i[Y]$ for every $0 \leq i \leq u$. The base case $i = 0$ holds by definition. For the inductive step, assume $\hat{Q}_i[X, Y] = Q_i[Y]$. Then because we assumed that the bad event of Equation~\ref{eqn:bad-event-1} did not occur, $\rho(\hat{\mathsf{Pow}}(\hat{Q}_i[X, Y], X), \mathsf{Pow}(\hat{Q}_i[Y])) \leq 2^{-2\Delta}$. And hence because we assumed that the bad event of Equation~\ref{eqn:bad-event-2} also did not occur, we may conclude that
		\[
			\mathsf{Snap}(\hat{\mathsf{Pow}}(\hat{Q}_i[X, Y], X), Y_{i + 1}) = \mathsf{Snap}(\mathsf{Pow}(Q_i[Y]), Y_{i + 1}).
		\]
		By definition, this implies that $\hat{Q}_{i + 1}[X, Y] = Q_{i + 1}[Y]$.
	\end{proof}
	
	We have shown that $Q_1, Q_2, \dots, Q_u$ provide good approximations of true powers of $Q_0$, and with high probability, $\hat{Q}_i = Q_i$ for every $i$. It immediately follows that a random transition of $\hat{Q}_u$ gives a similar distribution as $m_0^u$ random transitions of $Q_0$:
	
	\begin{proof}[Proof of correctness of $\mathsf{SZA}$]
        Lemmas~\ref{lem:sza-correctness-closeness} and~\ref{lem:sza-correctness-snapping} imply that
        \[
            \Pr[\rho(\hat{Q}_u[X, Y], Q_0^{m_0^u}) \leq 8m\epsilon] \geq 1 - 4m\epsilon.
        \]
        By Lemma~\ref{lem:total-variation-distance}, if $x$ and $y$ are such that $\rho(\hat{Q}_u[x, y], Q_0^{m_0^u}) \leq 8m\epsilon$, then $\hat{Q}_u[x, y](q; Z) \sim_{8m\epsilon} Q_0^{m_0^u}(q; U_{dm_0^u})$. An averaging argument completes the proof.
	\end{proof}
	
	\subsection{Efficiency} \label{sec:sza-efficiency}
	The seed length of $\mathsf{SZA}$ is $\ell + u\Delta + d$, which is $O(s + u \log(w/\epsilon))$. We argue that $\mathsf{SZA}$ can be implemented to run in $O(s + u \log(w/\epsilon))$ space through mutual recursion involving two subroutines. The first subroutine, given $i, r, z'$, computes $\hat{Q}_i[x, y](r; z')$:
	\begin{enumerate}
		\item If $i = 0$, just consult the input directly. Otherwise:
		\item Use the second subroutine to obtain each required entry of $\mathcal{M}(\hat{\mathsf{Pow}}(\hat{Q}_{i - 1}[x, y], x))$. Apply the definition of $\hat{Q}_i$ directly.
	\end{enumerate}
	The space used by this subroutine is only $O(\log(w/\epsilon))$ plus the space required for computing each matrix entry. The second subroutine, given $i, r, v$, computes $\mathcal{M}(\hat{\mathsf{Pow}}(\hat{Q}_i[x, y], x))_{rv}$:
	\begin{enumerate}
		\item Initialize $\xi = 0$. For all $z' \in \{0, 1\}^d$:
		\begin{enumerate}
			\item Use the oracle to compute $\hat{\mathsf{Pow}}(\hat{Q}_i[x, y], x)(r; z')$. If it gives $v$, set $\xi := \xi + 2^{-d}$. When the oracle makes reads to its automaton/start state inputs, use the first subroutine to compute the necessary values of $\hat{Q}_i[x, y]$. When the oracle makes reads to its seed inputs, (re)compute $\mathsf{Samp}(x, z')$ to obtain the appropriate bit.
		\end{enumerate}
		\item Output $\xi$.
	\end{enumerate}
	This subroutine's space usage can get up to $O(s + \log(w/\epsilon))$ for computing the sampler, but before each recursive call, it erases all but $O(\log(w/\epsilon))$ bits. By induction, this shows that the total space usage of each of these two subroutines (including now the space used for recursive calls) is $O(s + (i + 1) \log(w/\epsilon))$. It follows that the space used by $\mathsf{SZA}$ is $O(s + u \log(w/\epsilon))$, since it just requires a call to the first subroutine with $i = u$.
	
	In this implementation, the maximum number of unresolved oracle invocations at any time is indeed $u$, and there is indeed at most one unresolved read of a seed. This completes the proof of Theorem~\ref{thm:sza}. \qed
	
	\section{Transforming simulators into targeted PRGs} \label{sec:simulator-to-targeted-prg}
	
	Recall from Section~\ref{sec:proof-overview} that to prove the harder direction of our main result, we require three transformations: an assumed transformation of targeted pseudorandom generators into simulation advice generators, the SZA transformation, and a transformation of simulators into targeted pseudorandom generators. In this section, we construct the last transformation.
	
	We state our transformation in terms of the Ladner-Lynch (LL) oracle model \cite{ll76}. This model is simpler than the implicit oracle model of Section~\ref{sec:implicit-oracle}. An LL-model oracle algorithm has a single write-only oracle tape. When the algorithm makes a query, the contents of the oracle tape are erased, and the answer to the query is stored in the algorithm's state. Symbols written on the oracle tape do not count toward the algorithm's space complexity. For a non-Boolean oracle $f: \{0, 1\}^* \to \{0, 1\}^*$, the oracle algorithm is required to specify an index $i$ along with the query string $x$; the oracle responds with $f(x)_i$. We emphasize that as with the SZA transformation, this oracle model is only used to cleanly express the transformation; ultimately, we will plug in actual algorithms in place of the oracle.
	
	\begin{lem} \label{lem:simulator-to-targeted-prg}
		There exists a deterministic LL-model oracle algorithm $\mathsf{G}$ such that if $\mathsf{Sim}$ is an $\epsilon$-simulator for $\mathbf{Q}_{wm, d}^m$ with seed length $s$, then:
		\begin{enumerate}
			\item $\mathsf{G}^{\mathsf{Sim}}$ is a targeted $(2mw^2 \epsilon)$-pseudorandom generator against $\mathbf{Q}_{w, d}^m$.
			\item $\mathsf{G}^{\mathsf{Sim}}$ has seed length $s$ and space complexity $O(s + d + \log w + \log m)$.
		\end{enumerate}
	\end{lem}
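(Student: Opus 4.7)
The plan is to construct $\mathsf{G}^{\mathsf{Sim}}$ by a chunk-by-chunk application of the method of conditional probabilities. Given $(Q, q, x)$, we produce $y = y^{(1)} \cdots y^{(m)} \in (\{0,1\}^d)^m$ by maintaining the current state $q_{t-1}$ of $Q$ and, at step $t$, querying $\mathsf{Sim}$ on an augmented $wm$-state automaton $\tilde{Q}_t$ built from $Q$, $q_{t-1}$, and the step index $t$ (with a time counter absorbed into the state, padded by self-loops so that $\tilde{Q}_t$ runs for exactly $m$ steps). The simulator's output gives a sample of the next state $q_t$ whose marginal is close to $Q(q_{t-1}; U_d)$. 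We then let $y^{(t)}$ be any $d$-bit string satisfying $Q(q_{t-1}; y^{(t)}) = q_t$, defaulting to an arbitrary chunk if no such string exists.

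The correctness analysis proceeds by a hybrid argument. Define hybrids $H_0, H_1, \dots, H_m$ where $H_t$ uses $\mathsf{Sim}$ for the first $t$ steps and true random $d$-bit chunks for the remaining $m - t$ steps. Bounding $|H_t - H_{t-1}| \leq 2 w^2 \epsilon$ proceeds by conditioning on the value of $q_{t-1} \in [w]$ and applying $\mathsf{Sim}$'s marginal pseudorandomness guarantee; the factor of $w^2$ absorbs both the union bound over the $w$ possible conditioning states and the comparison of full $w$-entry transition rows. Telescoping over the $m$ hybrids yields total variation distance at most $2m w^2 \epsilon$ between the final-state distribution induced by $\mathsf{G}^{\mathsf{Sim}}$ and the true distribution $Q^m(q; U_{md})$, as required.

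For space: we store only the current state $q_{t-1}$ ($O(\log w)$ bits), the step counter $t$ ($O(\log m)$ bits), and local chunk variables ($O(d)$ bits). In the LL-oracle model, the (potentially large) descriptions of $\tilde{Q}_t$ are written on the oracle tape and do not count toward space, so the dominant cost is the $O(s)$ budget needed for the seed and any intermediate computation involving $\mathsf{Sim}$'s response. This gives total space $O(s + d + \log w + \log m)$, and the seed length is exactly $s$, since the same $s$-bit $x$ is reused for every oracle call.

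The main obstacle is the correlation issue: because each query automaton $\tilde{Q}_t$ depends on $x$ through the previously sampled states, $\mathsf{Sim}$'s per-call marginal guarantee does not immediately transfer to a joint guarantee on $(q_1, \dots, q_m)$. The hybrid argument above is designed to handle exactly this, but care is needed because the seed $x$ is reused across queries; the correct accounting requires comparing, at each hybrid step, a $\mathsf{Sim}$-generated transition against a fresh-randomness transition while the conditioning state ranges over $[w]$, which is precisely where the $w^2$ loss enters.
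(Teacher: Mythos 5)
There is a genuine gap: your construction reuses the single $s$-bit seed $x$ across $m$ adaptively chosen oracle queries, and the hybrid argument you sketch cannot bridge this. The simulator guarantee says only that for each \emph{fixed} automaton, the output is $\epsilon$-close to correct when the seed is \emph{uniform}. In your hybrid step you must condition on the previously sampled state $q_{t-1}$, but $q_{t-1}$ was computed from the same seed $x$, so under this conditioning $x$ is no longer uniform, and the simulator gives no guarantee whatsoever for the query $\tilde{Q}_t$ under the skewed conditional seed distribution. This is not a bookkeeping issue that a $w^2$ factor absorbs; it is fatal. A tiny counterexample: let every one-step transition be uniform over two successors, with the second-step transitions from $a$ and from $b$ ``wired oppositely,'' and let $\mathsf{Sim}$ answer every single-step query using just the first bit of $x$. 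Each individual query is then answered with \emph{zero} error, yet the chained process with a shared seed always lands in the same final state, at total variation distance $1/2$ from the true two-step distribution. So per-query marginal correctness does not compose adaptively under seed reuse, and bounding $|H_t - H_{t-1}|$ by $2w^2\epsilon$ is simply unjustified.

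The paper's proof avoids exactly this trap: the random seed is spent on a \emph{single} oracle call, which samples a target final state $R = \mathsf{Sim}(Q', q, x)$; everything afterwards is deterministic. At each of the $m$ steps the algorithm chooses the next chunk $z$ by enumerating over \emph{all} seeds $x'$ and counting how many satisfy $\mathsf{Sim}(Q', v'_z, x') = R'$, where $Q'$ simulates the remaining $m - i$ steps --- i.e., the simulator is used as an approximate oracle for the conditional probability of reaching $R$, and the method of conditional probabilities guarantees this probability drops by at most $2\epsilon$ per step. Consequently, for every ``typical'' target ($\Pr[Q^m(q; U_{dm}) = R] \geq 2m\epsilon$) the printed string provably drives $Q$ to $R$, and the final error accounting over typical and atypical states yields the $2mw^2\epsilon$ bound. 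If you want to salvage your chunk-by-chunk viewpoint, you must likewise make each step a deterministic maximization over all seeds against a target fixed by one initial query, rather than sampling each step with the shared seed.
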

	
    To prove Lemma~\ref{lem:simulator-to-targeted-prg}, we use $\mathsf{Sim}$ to choose a final state, and then we use $\mathsf{Sim}$ to ``reverse engineer'' a string that brings $Q$ to that final state. This reverse engineering process is a straightforward application of the method of conditional probabilities.
    
	\begin{proof}
		Given $(Q, q, x)$:
		\begin{enumerate}
            \item Let $Q'$ be the $(wm, d)$-automaton formed by adding dummy states to $Q$. Use the oracle to set $R := \mathsf{Sim}(Q', q, x)$.
			\item Initialize $v = q$. For $i = 0$ to $m - 1$:
			\begin{enumerate}
				\item For each $z \in \{0, 1\}^d$, let $v_z = Q(v; z)$.
				\item Let $Q'$ be a $(wm, d)$-automaton that simulates $m - i$ steps of $Q$, with $v'_z$ being the start state corresponding to $v_z$ and $R'$ being the end state corresponding to $R$.
				\item Compute the $z \in \{0, 1\}^d$ that maximizes $\#\{x' : \mathsf{Sim}(Q', v'_z, x') = R'\}$, breaking ties arbitrarily.
				\item Print $z$ and set $v := v_z$.
			\end{enumerate}
		\end{enumerate}
		Clearly, $\mathsf{G}$ outputs $dm$ bits and uses $O(s + d + \log w + \log m)$ space. Proof of correctness: For $t, r \in [w]$ and $i \in \{0, \dots, m - 1\}$, let $p_{t, r}[i] = \Pr[Q^{m - i}(t; U_{m - i}) = r]$. We show by induction on $i$ that at the beginning of iteration $i$ of the loop on line $2$, $p_{v, R}[i] \geq p_{q, R}[0] - 2i\epsilon$. Base case: At the beginning of iteration $i = 0$, $v = q$. Inductive step: Consider the execution of iteration $i$ of the loop. By the simulator guarantee, there is some $z \in \{0, 1\}^d$ such that $\#\{x': \mathsf{Sim}(Q', v'_z, x') = R\} \geq (p_{v, R}[i] - \epsilon) 2^s$. Therefore, $\mathsf{G}$ chooses a $z$ that also satisfies that inequality. Therefore, applying the simulator guarantee again, $p_{v_z, R}[i + 1] \geq p_{v, R}[i] - 2\epsilon$. This completes the induction.
		
		Now, let $X \sim U_s$, and let $Y = \mathsf{G}^{\mathsf{Sim}}(Q, q, X)$. Fix an arbitrary state $r \in [w]$; we will show that $\Pr[Q^m(q; Y) = r]$ is close to $\Pr[Q^m(q; U_{dm}) = r]$. Say $r$ is \emph{typical} if $p_{q, r}[0] \geq 2m\epsilon$. For the first case, suppose $r$ is typical. By the fact that we proved by induction, $\Pr[Q^m(q; Y) = R \mid R \text{ is typical}] = 1$. Therefore,
        \begin{align*}
			\Pr[Q^m(q; Y) = r] &= \Pr[Q^m(q; Y) = r \mid R = r] \cdot \Pr[R = r] + \Pr[Q^m(q; Y) = r \mid R \neq r] \cdot \Pr[R \neq r] \\
			&= \Pr[R = r] + \Pr[Q^m(q; Y) = r \mid R \neq r] \cdot \Pr[R \neq r].
		\end{align*}
        This expression is \emph{lower} bounded by $\Pr[R = r]$, which is lower bounded by $\Pr[Q^m(q; U_{dm}) = r] - \epsilon$ by the simulator guarantee. On the other hand, the expression is \emph{upper} bounded by $\Pr[R = r] + \Pr[R \text{ is atypical}]$, which is upper bounded by $\Pr[Q^m(q; U_{dm}) = r] + \epsilon + 2mw\epsilon$ by the simulator guarantee, the definition of typicality, and the union bound.
        
        For the second case, suppose $r$ is atypical. Then $Q^m(q; Y) = r$ implies that $R$ is atypical, which happens with probability at most $2mw\epsilon + \epsilon$ by the definition of typicality and the simulator guarantee.
        
        Therefore, in either case, $\Pr[Q^m(q; Y) = r]$ is within $\pm (2mw + 1)\epsilon$ of $\Pr[Q^m(q; U_{dm}) = r]$. Statistical distance is half $L_1$ distance, so the error of $\mathsf{G}^{\mathsf{Sim}}$ is at most $\frac{1}{2} w(2mw + 1)\epsilon \leq 2mw^2 \epsilon$.
	\end{proof}
	
	\section{Proof of Theorem~\ref{thm:targeted-prg-to-advice-generator}}
	\subsection{Composing the transformations}
	
	In this section, we compose the transformation of Condition~$\ref{cond:targeted-prg-to-advice-generator}$ of Theorem~\ref{thm:targeted-prg-to-advice-generator}, the SZA transformation, and the transformation of Section~\ref{sec:simulator-to-targeted-prg}. (In the overview of Section~\ref{sec:proof-overview}, this corresponds to the composition of steps 1, 2, and 3.) The composition is a transformation on targeted pseudorandom generators:
	
	\begin{lem} \label{lem:cycle}
		Assume Condition~$\ref{cond:targeted-prg-to-advice-generator}$ of Theorem~$\ref{thm:targeted-prg-to-advice-generator}$ is true. Fix a constant $\beta > 0$, sufficiently small constants $\sigma > \eta > \gamma > 0$, and a constant $\mu \in (\gamma, 1 - \beta]$. Suppose there is a family $\{\mathsf{Gen}_w\}$, where $\mathsf{Gen}_w$ is an efficiently computable targeted $\epsilon$-pseudorandom generator against $\mathbf{Q}_{w, 1}^m$ with seed length $s$ satisfying
		\begin{align*}
			&s \leq O(\log^{1 + \sigma} w),
			&&\log(1/\epsilon) = \log^{1 + \eta} w,
			&&&\log m \geq \log^{\mu} w. \\
			\intertext{Then there is another family $\{\mathsf{Gen}'_w\}$, where $\mathsf{Gen}'_w$ is an efficiently computable targeted $\epsilon'$-pseudorandom generator against $\mathbf{Q}_{w, 1}^{m'}$ with seed length $s'$ satisfying}
			&s' \leq O(\log^{1 + \max\{\sigma, \beta\} + 4\eta} w),
			&&\log(1/\epsilon') \geq \Omega(\log^{1 + \eta - \gamma} w),
			&&&\log m' \geq \log^{\mu + \beta} w.
		\end{align*}
	\end{lem}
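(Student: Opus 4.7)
The plan is to realize one iteration of the cycle sketched in Section~\ref{sec:proof-overview}: apply Condition~\ref{cond:targeted-prg-to-advice-generator} to turn the given targeted PRG into a simulation advice generator, feed the induced simulator into the SZA transformation (Theorem~\ref{thm:sza}) to dramatically extend the number of simulated steps, and then use Lemma~\ref{lem:simulator-to-targeted-prg} to turn the extended simulator back into a targeted PRG. The essential design decision is \emph{which} index of the input family to plug in: Lemma~\ref{lem:simulator-to-targeted-prg} demands a simulator whose state count is $w \cdot m'$, while SZA leaves the state count untouched, so I will evaluate $\{\mathsf{Gen}_w\}$ at an inflated index $W$ rather than at $w$ itself.

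Concretely, set the target output length $m' = 2^{\lceil \log^{\mu+\beta} w \rceil}$, and let $d = \Theta(\log^{1+\eta-\gamma} w)$ be the transition width that SZA will ultimately demand at state count $W$. Choose $W$ so that $W \approx w m' / d$; the hypothesis $\mu + \beta \leq 1$ then guarantees $\log W = \Theta(\log w)$, so the parameters of $\mathsf{Gen}_W$ promised by Condition~\ref{cond:targeted-prg-to-advice-generator} have the same asymptotic form in $\log w$ as in $\log W$. Applying Condition~\ref{cond:targeted-prg-to-advice-generator} to $\mathsf{Gen}_W$ yields an $\epsilon_1$-simulation advice generator for $\mathbf{Q}_{W,1}^{m_1}$ with $\log(1/\epsilon_1) = \log^{1+\eta-\gamma} w$, $\log m_1 \geq \log^{\mu-\gamma} w$, seed length $O(\log^{1+\sigma+\gamma} w)$, and advice length $O(\log^{1+\eta+\gamma} w)$. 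Grouping bits $d$ at a time and adding a fail state, I feed the corresponding simulator into Theorem~\ref{thm:sza} with target $m = \lceil m'/d \rceil$. The recursion depth is $u = \lceil \log m / \log m_0 \rceil = \Theta(\log^{\beta+\gamma} w)$, so the new seed length is $O(\log^{1+\sigma+\gamma} w + \log^{1+\eta+\beta} w)$, and, via Lemma~\ref{lem:implicit-oracle} (using SZA's bounds of one unresolved seed read and $u$ unresolved invocations), the new space complexity is $O(\log^{1+\sigma+\gamma} w + \log^{1+\eta+\beta+\gamma} w)$. Dropping the fail state and feeding this simulator to Lemma~\ref{lem:simulator-to-targeted-prg} produces a targeted $\epsilon'$-PRG against $\mathbf{Q}_{w,d}^{m_2}$, which I reinterpret as a PRG against $\mathbf{Q}_{w,1}^{m'}$ by ungrouping bits.

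The claimed output parameters then follow by arithmetic: the seed length and space are both dominated by $\log^{1+\max\{\sigma+\gamma,\,\eta+\beta+\gamma\}} w$, which fits inside $\log^{1+\max\{\sigma,\beta\}+4\eta} w$ because $\gamma < \eta$; and $\log(1/\epsilon') = \log(1/\epsilon_1) - 2\log W - O(\log m') - O(1) = \Omega(\log^{1+\eta-\gamma} w)$, exploiting the crucial gap $1 + \eta - \gamma > 1 \geq \mu + \beta$. The only nontrivial difficulty I anticipate is this initial state-count and bit-width juggling: one must simultaneously arrange that $W = w m_2$ (to appease Lemma~\ref{lem:simulator-to-targeted-prg}), $d m_2 = m'$ (to get the right output length), $d$ is consistent with SZA's demand at state count $W$, and $\log W = \Theta(\log w)$ so the hypothesis's bounds transfer back to $w$. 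Once this setup is in place, the three transformations plug together directly and the final parameter accounting is essentially mechanical.
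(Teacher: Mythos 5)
Your high-level plan matches the paper's own proof: inflate the state-count index so the hypothesis yields a simulator with enough states to appease Lemma~\ref{lem:simulator-to-targeted-prg}, convert the resulting simulation advice generator into a $d$-bit-transition simulator with a fail state (this is the paper's Lemma~\ref{lem:advice-generator-d}), amplify with $\mathsf{SZA}$, and close the loop with $\mathsf{G}$. The bookkeeping you sketch -- $\log W = \Theta(\log w)$ from $\mu + \beta \leq 1$, recursion depth $u = O(\log^{\beta+\gamma} w)$, the seed/space/error arithmetic, the small fixed-point argument needed to choose $d$ consistently -- is also essentially on target.

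There is, however, one genuine gap. Theorem~\ref{thm:sza} only guarantees a simulator for $\tilde{\mathbf{Q}}_{W,d}^{m_1}$ for \emph{some} $m_1 \geq m_2$, where $m_1 = m_0^u$ is typically strictly larger than your target $m_2 = \lceil m'/d\rceil$. Neither ``dropping the fail state'' nor ``ungrouping bits'' reduces the simulated step count from $m_1$ down to $m_2$. If you feed a simulator for $\mathbf{Q}_{W,d}^{m_1}$ into Lemma~\ref{lem:simulator-to-targeted-prg}, you must factor the state count as $W = w' m_1$, so with your choice $W \approx w m_2$ you get $w' = w m_2 / m_1 < w$, i.e.\ a targeted PRG against \emph{fewer} than $w$ states, which is not what the lemma demands. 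The paper handles this with Lemma~\ref{lem:exact-simulation-length}: augment the automaton with a step counter, so the $m_1$-step simulator is forced to emulate exactly $m_2$ steps; this inflates the state count by an extra factor of $m_2 + 1$, so the correct up-front index is $W \approx w m_2(m_2 + 1)$ rather than $w m_2$. That change does not hurt your asymptotics (still $\log W = \Theta(\log w)$ since $\mu + \beta \leq 1$), but the clock construction itself is an indispensable step that your proposal omits. Once that is added, your single up-front re-indexing is an acceptable packaging of what the paper does via two re-indexings (one before $\mathsf{SZA}$ and one before Lemma~\ref{lem:exact-simulation-length}).
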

	
	All the hard work of proving Lemma~\ref{lem:cycle} has already been done in Sections~\ref{sec:saks-zhou} and~\ref{sec:simulator-to-targeted-prg}; conceptually, the proof is simply by composing. Some technicalities complicate matters slightly. First, we need two little lemmas to deal with the fact that $d > 1$ in Theorem~\ref{thm:sza}, to deal with the fact that Theorem~\ref{thm:sza} is phrased in terms of automata with fail states, and to deal with the relationship between $w$ and $m$ in Lemma~\ref{lem:simulator-to-targeted-prg}.
	
	\begin{lem} \label{lem:advice-generator-d}
		Suppose $\mathsf{Gen}$ is an $\epsilon$-simulation advice generator for $\mathbf{Q}_{(w + 1)2^d, 1}^{m}$. Then $\mathsf{Gen}$ is also an $\epsilon$-simulation advice generator for $\tilde{\mathbf{Q}}_{w, d}^{\lfloor m/d \rfloor}$.
	\end{lem}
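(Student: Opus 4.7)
The plan is to simulate a $(w,d)$-automaton with fail state $Q$ by a plain $((w+1) \cdot 2^d, 1)$-automaton $Q'$ that reads its input one bit at a time through a buffer of length less than $d$, and then recover $\lfloor m/d \rfloor$ steps of $Q$ by projecting onto the $Q$-state coordinate after $m$ steps of $Q'$.

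Concretely, I would take the reachable states of $Q'$ to be pairs $(r, s)$ where $r \in [w+1]$ is a state of $Q$ (including its fail state $w+1$) and $s$ is a bit-string of length less than $d$. There are at most $(w+1)(2^d - 1)$ such pairs, so any leftover slots up to $(w+1) \cdot 2^d$ are unreachable dummies. The transition rule is $Q'((r, s); b) = (r, sb)$ when $|s| < d - 1$, and $Q'((r, s); b) = (Q(r; sb),\, \text{empty})$ when $|s| = d - 1$. Starting $Q'$ from $(q, \text{empty})$ and running it for $m$ bits therefore applies $Q$ exactly $\lfloor m/d \rfloor$ times, consuming the first $\lfloor m/d \rfloor \cdot d$ bits while leaving the rest in the buffer. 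So the first coordinate of $(Q')^m((q, \text{empty}); U_m)$ is distributed exactly as $Q^{\lfloor m/d \rfloor}(q; U_{\lfloor m/d \rfloor \cdot d})$.

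For the simulation advice generator itself, let $\mathsf{S}$ be the logspace algorithm witnessing that $\mathsf{Gen}$ is an $\epsilon$-simulation advice generator for $\mathbf{Q}_{(w+1)2^d, 1}^m$, and define $\mathsf{S}'(Q, q, A)$ to run $\mathsf{S}$ on $(Q', (q, \text{empty}), A)$ and output the first coordinate of the state that $\mathsf{S}$ returns. The automaton $Q'$ is never written to a tape: whenever $\mathsf{S}$ asks for a bit of the description of $Q'$, $\mathsf{S}'$ decodes which (state, input bit) pair is being queried, applies the transition rule above (one query into $Q$ plus $O(d + \log w)$ workspace), and returns the appropriate bit. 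Since $\mathsf{S}$ uses $O(d + \log w + \log a)$ space on inputs of length $\poly((w+1) \cdot 2^d, a)$, this virtualization keeps $\mathsf{S}'$ within $O(d + \log w + \log a)$ space as well. Correctness of the error bound is then immediate: projecting onto a coordinate is deterministic and cannot increase total variation distance, so $\mathsf{S}'$ inherits the error $\epsilon$. The proof is essentially bookkeeping and I do not anticipate any real obstacle; the only mild point to be careful about is that $Q'$ must be declared to have exactly $(w+1) \cdot 2^d$ states, which the dummy-state padding handles cleanly.
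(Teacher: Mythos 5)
Your proposal is correct and matches the paper's proof: the paper likewise forms the $((w+1)2^d,1)$-automaton $Q'$ on state space $[w+1]\times\{0,1\}^{<d}$ that buffers $d$ bits per step of $Q$, feeds $(Q', q', \mathsf{Gen}(x))$ to the original logspace algorithm $\mathsf{S}$, and projects the answer back, noting that these maps are logspace-computable. Your write-up just spells out the bookkeeping (dummy-state padding, on-the-fly virtualization of $Q'$, and the projection/TV-distance argument) that the paper leaves implicit.
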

	
	\begin{proof}
		Let $\mathsf{S}$ be the logspace algorithm such that $\mathsf{S}(Q, q, \mathsf{Gen}(x))$ is an $\epsilon$-simulator for $\mathbf{Q}_{(w + 1)2^d, 1}^{m}$. Let $a$ be the output length of $\mathsf{Gen}$. For a $(w, d)$-automaton with fail state $Q$, a start state $q \in [w + 1]$, and a string $y \in \{0, 1\}^a$, let $\mathsf{S'}(Q, q, y)$ behave as follows:
		\begin{enumerate}
			\item Let $Q'$ be the $((w + 1)2^d, 1)$-automaton that simulates $Q$. (One step of $Q$ is simulated by $d$ steps of $Q'$; the state space of $Q'$ is $[w + 1] \times \{0, 1\}^{< d}$.) Let $q'$ be the start state of $Q'$ corresponding to $q$.
			\item Let $r' = \mathsf{S}(Q', q', y)$.
			\item Return the state $r \in [w + 1]$ that corresponds to $r'$.
		\end{enumerate}
		The maps $(Q, q, y) \mapsto (Q', q', y)$ and $r' \mapsto r$ are computable in logspace, so $\mathsf{S'}$ can be implemented to run in logspace. Clearly, $\mathsf{S'}(Q, q, \mathsf{Gen}(x))$ is an $\epsilon$-simulator for $\tilde{\mathbf{Q}}_{w, d}^{\lfloor m/d \rfloor}$.
	\end{proof}
	
	\begin{lem} \label{lem:exact-simulation-length}
		There exists a deterministic LL-model oracle algorithm $\mathsf{R}$ with the following properties. Pick $m \leq m'$ and $d \leq d'$. Suppose $\mathsf{Sim}$ is an $\epsilon$-simulator for $\tilde{\mathbf{Q}}_{wm(m + 1), d'}^{m'}$ with seed length $s$. Then $\mathsf{R}^{\mathsf{Sim}}_{m, d}$ is an $\epsilon$-simulator for $\mathbf{Q}_{wm, d}^m$ with seed length $s$. (Here $m, d$ are inputs to $\mathsf{R}$; we write them as subscripts merely to separate them from the usual simulator inputs.) Further, $\mathsf{R}^{\mathsf{Sim}}_{m, d}$ only uses space $O(d' + \log w + \log m)$.
	\end{lem}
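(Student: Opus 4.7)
The plan is to embed the $(wm,d)$-automaton running $m$ steps into a $(wm(m+1),d')$-automaton with fail state running $m'$ steps, then just project. The key idea is to augment the state space with a step counter: given $Q: [wm]\times\{0,1\}^d \to [wm]$ and start state $q$, define a $(wm(m+1),d')$-automaton with fail state $Q'$ on state space $[wm]\times\{0,1,\ldots,m\}$ (plus a fail state that is never actually reached) with start state $(q,0)$ and transitions
\[
    Q'((v,i); z) = \begin{cases} (Q(v; z_1\cdots z_d), i+1) & \text{if } i < m \\ (v,m) & \text{if } i = m, \end{cases}
\]
where $z \in \{0,1\}^{d'}$ and the last $d' - d$ bits are ignored. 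The fail state is absorbing but unreachable from any normal state. Since the step counter freezes at $m$, for every $m' \geq m$ the distribution $Q'^{m'}((q,0); U_{m'd'})$ equals $(Q^m(q; U_{md}), m)$, where $U_{md}$ uses the first $d$ bits of each of the first $m$ length-$d'$ blocks.

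The algorithm $\mathsf{R}^{\mathsf{Sim}}_{m,d}(Q,q,x)$ then simply queries $\mathsf{Sim}(Q', (q,0), x)$ through the oracle and returns the first coordinate of the resulting state. In the LL-model, $\mathsf{R}$ writes the entire description of $(Q', (q,0), x)$ on its write-only oracle tape, a bit at a time, computing each bit on the fly: given an index into the truth table of $Q'$, decode it as a pair $(v,i,z) \in [wm]\times\{0,\ldots,m\}\times\{0,1\}^{d'}$, handle the cases $i = m$ and the fail state trivially, and otherwise look up $Q(v; z_1\cdots z_d)$ in the input. This computation uses $O(d' + \log w + \log m)$ working space, and the oracle tape itself does not count toward the space bound. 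The seed $x$ is passed through verbatim, so the seed length is $s$.

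Correctness is immediate: the simulator guarantee ensures $\mathsf{Sim}(Q',(q,0),U_s) \sim_{\epsilon} Q'^{m'}((q,0); U_{m'd'}) = (Q^m(q; U_{md}), m)$, and projecting onto the first coordinate preserves total variation distance, yielding $\mathsf{R}^{\mathsf{Sim}}_{m,d}(Q,q,U_s) \sim_{\epsilon} Q^m(q; U_{md})$, as required.

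I do not expect any serious obstacle here; the only thing to verify carefully is that the state space $[wm]\times\{0,\ldots,m\}$ has exactly $wm(m+1)$ elements so that $Q'$ fits into $\tilde{\mathbf{Q}}_{wm(m+1),d'}^{m'}$, and that the on-the-fly computation of transitions of $Q'$ respects the claimed $O(d' + \log w + \log m)$ space bound. Both are straightforward bookkeeping once the embedding is in place.
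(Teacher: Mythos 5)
Your proposal is correct and takes essentially the same approach as the paper: both embed $Q$ into a $(wm(m+1),d')$-automaton with fail state by pairing the $Q$-state with a step counter (you use a $0$-indexed counter $\{0,\ldots,m\}$, the paper uses $[m+1]$, but these are the same construction), compute $Q'$ on the fly onto the write-only oracle tape to stay within the space bound, and project the oracle's answer onto the first coordinate, invoking the data processing inequality for total variation distance. The only minor quibble is that projecting ``preserves'' TV distance should be ``does not increase'' (the paper says it ``can only make distributions closer''), which is what is actually needed since $\mathsf{Sim}$'s output need not have second coordinate exactly $m$.
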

	
	\begin{proof}
		Given $(Q, q, x)$ and oracle access to $\mathsf{Sim}$:
		\begin{enumerate}
			\item Let $Q'$ be a $(wm(m + 1), d')$-automaton with fail state on state space $[wm] \times [m + 1]$ (plus a fail state) defined by
			\[
			Q'((q, t); y) =
			\begin{cases}
			(Q(q; y \restriction_d), t + 1) & \text{if } t \leq m \\
			(q, t) & \text{if } t = m + 1.
			\end{cases}
			\]
			Here $y \restriction_d$ denotes the first $d$ bits of $y$. 
			\item Output the first coordinate of $\mathsf{Sim}(Q', (q, 1), x)$.
		\end{enumerate}
		The first coordinate of $(Q')^{m'}((q, 1); U_{m' d'})$ is distributed identically to $Q^m(q; U_{md})$, and applying a deterministic function (such as ``the first coordinate of'') can only make distributions closer, so this algorithm is correct. Clearly, $Q'$ can be computed from $Q$ in space $O(d' + \log w + \log m)$.
	\end{proof}
	
	Now we are ready to prove Lemma~\ref{lem:cycle}; the proof mainly consists in verifying parameters.
	
	\begin{proof}[Proof of Lemma~$\ref{lem:cycle}$]
		Using Condition~\ref{cond:targeted-prg-to-advice-generator} of Theorem~\ref{thm:targeted-prg-to-advice-generator}, transform the family $\{\mathsf{Gen}_w\}$ into a family $\{\mathsf{AdvGen}_w\}$ of simulation advice generators. For each $w$, let $\mathsf{Sim}_w$ be the simulator induced by $\mathsf{AdvGen}_{(w + 1)2^d}$ using Lemma~\ref{lem:advice-generator-d}, where $d = \lceil c [\log^{1 + \eta - \gamma}(w) + \log(w)] \rceil$ and $c$ is the constant in Theorem~\ref{thm:sza}. Define
		\[
			\mathsf{Sim}'_w = \mathsf{SZA}_{m'}^{\mathsf{Sim}_w} \quad \text{where } \log m' = \lceil \log^{\mu + \beta} w \rceil.
		\]
		Define
		\[
			\mathsf{Sim}''_w = \mathsf{R}_{m', 1}^{\mathsf{Sim}'_{wm'(m' + 1)}},
		\]
		where $\mathsf{R}$ is the algorithm of Lemma~\ref{lem:exact-simulation-length}. Finally, define
		\[
			\mathsf{Gen}'_w = \mathsf{G}^{\mathsf{Sim}''_w},
		\]
		where $\mathsf{G}$ is the algorithm of Lemma~\ref{lem:simulator-to-targeted-prg}.
		
		Now that we have constructed $\mathsf{Gen}'_w$, we show that our construction worked. Since $\log^{1 + \eta - \gamma} w$ is monotone increasing, $\mathsf{Gen}'_{(w + 1)2^d}$ can be thought of as having error $\epsilon_0$ where $\log(1/\epsilon_0) = \log^{1 + \eta - \gamma} w$. Therefore, $\mathsf{Sim}_w$ is an $\epsilon_0$-simulator for $\tilde{\mathbf{Q}}_{w, d}^{m_0}$, where $\log m_0 \geq \Omega(\log^{\mu - \gamma}(w) - \log(d)) = \Omega(\log^{\mu - \gamma} w)$. Observe that the chosen $d$ value is exactly $\lceil c \log(w/\epsilon_0) \rceil$. Therefore, by Theorem~\ref{thm:sza}, $\mathsf{Sim}'_w$ is a $(12w\epsilon_0)$-simulator for $\tilde{\mathbf{Q}}_{w, d}^{m_1}$ for some $m_1 \geq m'$. Again using monotonicity, we can think of $\mathsf{Sim}'_{wm'(m' + 1)}$ as having the same error. By Lemma~\ref{lem:exact-simulation-length}, this implies that $\mathsf{Sim}''_w$ is a $(12 w \epsilon_0)$-simulator for $\mathbf{Q}_{wm', 1}^{m'}$, and hence $\mathsf{Gen}'_w$ is a targeted $\epsilon'$-pseduorandom generator against $\mathbf{Q}_{w, 1}^{m'}$, where $\epsilon' = 24 m' w^3 \epsilon_0$, and hence $\log(1/\epsilon') \geq \Omega(\log^{1 + \eta - \gamma} w)$ as desired.
		
		The seed length of $\mathsf{Sim}_w$ is $s_0 \leq O(\log^{1 + \sigma + \gamma}((w + 1)2^d))$, which is $O(\log^{(1 + \sigma + \gamma)(1 + \eta - \gamma)} w)$. Since $1 + \sigma + \eta + \gamma + \sigma \eta + \gamma \eta < 1 + \sigma + 4\eta$, we have $s_0 \leq O(\log^{1 + \sigma + 4\eta} w)$. The parameter $u$ of Theorem~\ref{thm:sza} is bounded by
		\[
			u \leq O\left(\frac{\log^{\mu + \beta} w}{\log^{\mu - \gamma} w}\right) = O(\log^{\beta + \gamma} w).
		\]
		Therefore, the seed length of $\mathsf{Sim}'_w$ is $O(s_0 + u \log(w/\epsilon_0)$, which is $O(\log^{1 + \sigma + 4\eta}(w) + \log^{1 + \beta + \eta}(w))$, which is $O(\log^{1 + \max\{\sigma, \beta\} + 4\eta} w)$. Thus the seed length of $\mathsf{Sim}''_w$ is $O(\log^{1 + \max\{\sigma, \beta\} + 4\eta} \poly(w))$, which is $O(\log^{1 + \max\{\sigma, \beta\} + 4\eta} w)$. Hence the seed length of $\mathsf{Gen}'_w$ is the same.
		
		The output length $a$ of $\mathsf{AdvGen}_{w 2^d}$ satisfies $\log a \leq O(\log^{1 + \eta + \gamma}((w + 1)2^d))$, which is $O(\log^{(1 + \eta + \gamma)(1 + \eta - \gamma)} w)$. Since $(1 + \eta)^2 \leq 1 + 3\eta$, we have $a \leq O(\log^{1 + 3\eta} w)$. Therefore, by Lemma~\ref{lem:implicit-oracle} and Theorem~\ref{thm:sza}, the space complexity of $\mathsf{Sim}'_w$ is $O(s_0 + u[d + \log w + \log a])$, which is $O(\log^{1 + \sigma + 4\eta}(w) + \log^{1 + \beta + 3\eta + \gamma} w)$, which is $O(\log^{1 + \max\{\sigma, \beta\} + 4\eta} w)$. Therefore, by Lemma~\ref{lem:exact-simulation-length}, the space complexity of $\mathsf{Sim}''_w$ satisfies the same bound, and hence so does that of $\mathsf{Gen}'_w$.
	\end{proof}

	\subsection{Iterating the composition}
    
    In this section, we prove the $(2 \implies 1)$ direction of Theorem~\ref{thm:targeted-prg-to-advice-generator}, i.e. we give a strong derandomization under the assumption that targeted pseudorandom generators can be transformed into simulation advice generators. The proof follows the idea outlined in Section~\ref{sec:proof-overview}: we repeatedly apply the composition transformation of the last section $t$ times for an arbitrarily large constant $t$. Each application substantially increases the output length of our targeted pseudorandom generator while the other parameters degrade negligibly, so we end up with an efficiently computable targeted pseudorandom generator with output length $w$ and seed length $O(\log^{1 + O(1/t)} w)$:
    
    \begin{lem} \label{lem:iterate}
        Assume Condition~$\ref{cond:targeted-prg-to-advice-generator}$ of Theorem~$\ref{thm:targeted-prg-to-advice-generator}$ is true. Fix $\alpha > 0$. There is a family $\{\mathsf{Gen}_w\}$, where $\mathsf{Gen}_w$ is an efficiently computable targeted $(1/6)$-pseudorandom generator against $\mathbf{Q}_{w, 1}^m$ with seed length $O(\log^{1 + \alpha} w)$ where $m \geq w$.
    \end{lem}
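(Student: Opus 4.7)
The plan is to iterate Lemma~\ref{lem:cycle} a constant $t$ number of times, starting from the Impagliazzo--Nisan--Wigderson pseudorandom generator as the base case. Without loss of generality I may assume $\alpha$ is as small as convenient, because the claim for any $\alpha$ follows from the claim for a smaller $\alpha' \leq \alpha$. I would then choose positive constants $\mu_0, \beta, \eta, \gamma$ arranged so that: (i) $\mu_0 + t\beta \geq 1$ while $\mu_0 + (t-1)\beta \leq 1 - \beta$, where $t := \lceil (1-\mu_0)/\beta \rceil = O(1/\alpha)$ (so the hypothesis $\mu \in (\gamma, 1-\beta]$ of Lemma~\ref{lem:cycle} is met throughout); (ii) $\mu_0 + \eta + 4t\eta \leq \alpha$ (so the final seed length stays within budget); (iii) $\eta > t\gamma$ (so the error exponent stays positive throughout); and (iv) all constants are ``sufficiently small'' in the sense of Lemma~\ref{lem:cycle}. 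A concrete choice such as $\beta = \mu_0 = \alpha/C$, $\eta = \alpha/(Ct)$, and $\gamma = \eta/(2t)$ works for a sufficiently large universal constant $C$.

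For the base case, take $\mathsf{Gen}_w^{(0)}$ to be the INW generator from \cite{inw94} tuned to fool $\mathbf{Q}_{w,1}^{m_0}$ with error $\epsilon_0 = 2^{-\log^{1+\eta} w}$, where $\log m_0 = \lceil \log^{\mu_0} w\rceil$. This is an efficiently computable ordinary (hence also targeted) $\epsilon_0$-pseudorandom generator with seed length $O(\log m_0 \cdot \log(w m_0/\epsilon_0)) = O(\log^{1+\sigma_0} w)$ where $\sigma_0 := \mu_0 + \eta$. For the inductive step, apply Lemma~\ref{lem:cycle} to $\mathsf{Gen}_w^{(i)}$ with hypothesis parameter $\mu := \mu_0 + i\beta$ to obtain $\mathsf{Gen}_w^{(i+1)}$. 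Since $\sigma_i \geq \sigma_0 \geq \beta$ by choice of constants, the $\max\{\sigma_i, \beta\}$ in the lemma's conclusion equals $\sigma_i$, and a routine induction on $i$ gives that $\mathsf{Gen}_w^{(i)}$ is an efficiently computable targeted $\epsilon_i$-PRG against $\mathbf{Q}_{w,1}^{m_i}$ with $\log m_i \geq \log^{\mu_0 + i\beta} w$, seed length $O(\log^{1 + \sigma_0 + 4i\eta} w)$, and $\log(1/\epsilon_i) \geq \Omega(\log^{1 + \eta - i\gamma} w)$.

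After $t$ iterations, $\log m_t \geq \log w$ (so $m_t \geq w$), the seed length is $O(\log^{1+\alpha} w)$, and $\epsilon_t \leq 2^{-\Omega(\log^{1 + \eta - t\gamma} w)} \leq 1/6$ for all sufficiently large $w$; for the finitely many small $w$ where the last inequality fails, I would simply hard-code a trivial targeted PRG (e.g., the identity function on $\{0,1\}^w$). The main obstacle, modest as it is, is the parameter bookkeeping: one must verify that the chosen constants simultaneously satisfy all the hypotheses of Lemma~\ref{lem:cycle} (including its ``sufficiently small'' quantifier) across all $t$ iterations, and that the cumulative degradation in seed length and in error exponent lands cleanly within the desired window. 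Nothing deeper happens---each application pushes $\mu$ up by $\beta$, $\sigma$ up by $4\eta$, and $\eta$ down by $\gamma$, so after $O(1/\alpha)$ rounds the parameters arrive at the target.
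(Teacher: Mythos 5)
Your proposal takes the same route as the paper: iterate Lemma~\ref{lem:cycle} a constant number $t$ of times, starting from the INW generator, choosing $\beta, \eta, \gamma$ so that the seed-length exponent stays below $\alpha$, the error exponent stays positive, and $\mu$ marches from a small constant up to $1$ (always inside $(\gamma, 1-\beta]$). The overall structure, the inductive invariant, and the parameter-choice strategy are essentially identical to the paper's; you also correctly identify that the only content is bookkeeping and that finitely many small $w$ need patching.

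One technical gap in the bookkeeping, though. Your inductive invariant states $\log(1/\epsilon_i)\geq \Omega(\log^{1+\eta-i\gamma} w)$, but the hypothesis of Lemma~\ref{lem:cycle} requires an \emph{equality} $\log(1/\epsilon)=\log^{1+\eta'} w$. An $\Omega(\cdot)$ lower bound with a constant below $1$ cannot be treated as $\log^{1+\eta'} w$ with $\eta'=\eta-i\gamma$ for all $w$; one must round the exponent \emph{strictly} down, which then makes the losses accumulate in a way your invariant doesn't track. The paper resolves this by using a decrement of $2\gamma$ rather than $\gamma$: it maintains the invariant $\log(1/\epsilon_i)=\log^{1+\eta-2i\gamma} w$ exactly, notes that the output of Lemma~\ref{lem:cycle} satisfies $\log(1/\epsilon')\geq\Omega(\log^{1+\eta-(2i+1)\gamma} w)$, and absorbs the hidden constant into the extra $\gamma$'s worth of exponent for all sufficiently large $w$ (modifying finitely many small $w$ at each step, not only at the end). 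To repair your argument, replace your decrement $\gamma$ with $2\gamma$ (so your condition (iii) should read $\eta>2t\gamma$, e.g.\ $\gamma=\eta/(3t)$), and perform the ``absorb the constant, patch small $w$'' step inside the induction rather than once at the end. With that change the proof goes through as in the paper.
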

    
    \begin{proof}
        Let $t = \lceil 2/\alpha \rceil$, $\beta = 1/t$, $\eta = \alpha / (8t)$, and $\gamma = \eta/(3t)$. We show by induction that for $1 \leq i \leq t$, there is a family $\{\mathsf{Gen}_w\}$, where $\mathsf{Gen}_w$ is an efficiently computable targeted $\epsilon_i$-pseudorandom generator against $\mathbf{Q}_{w, 1}^{m_i}$ with seed length $s_i$ satisfying
        \begin{align*}
        	&s_i \leq O(\log^{1 + \beta + 4i\eta} w), &&\log(1/\epsilon_i) = \log^{1 + \eta - 2i\gamma} w, &&&\log m_i \geq \log^{i \beta} w.
        \end{align*}
        For the base case $i = 1$, use the generator of \cite[Theorem 3]{inw94}. For the chosen output length and error, the seed length is $O((\log^{1 + \eta - 2\gamma} w)(\log^{\beta} w))$. For the inductive step, suppose we have constructed family $i$. Apply Lemma~\ref{lem:cycle} to this family, using the chosen $\beta, \gamma$ values. (By the choice of $\gamma$, $\eta - 2i\gamma > 0$.) The parameters of the resulting family are all correct except that the error merely satisfies $\log(1/\epsilon') \geq \Omega(\log^{1 + \eta - (2i + 1) \gamma} w)$; for sufficiently large $w$, this is at least $\log^{1 + \eta - 2(i + 1)\gamma} w$, so modifying finitely many elements of the family gives family $i + 1$.
        
        That completes the induction. To prove the lemma, use family $t$. The output length is at least $w$ as desired, and the error is subconstant as desired. The space complexity and seed length are $\log^{1 + \beta + 4t\eta} w$. By the choices of $\beta, \eta, \gamma$, $1 + \beta + 4t\eta \leq 1 + \alpha$ as desired (as long as $t$ is sufficiently large.)
    \end{proof}
    
    \begin{proof}[Proof of the $(\ref{cond:targeted-prg-to-advice-generator} \implies \ref{cond:derand})$ direction of Theorem~$\ref{thm:targeted-prg-to-advice-generator}$]
        Fix some promise problem
        \[
	        A \in \bigcap_{\beta > 0} \mathbf{promise\mhyphen BPSPACE}(\log^{1 + \beta} n)
	    \]
	    and a constant $\alpha > 0$. Let $M$ be a probabilistic space-$O(\log^{1 + \alpha} n)$ Turing machine that decides $A$ with error $1/6$. Without loss of generality, assume that $M$ has unique accept/reject configurations. On input $x \in \{0, 1\}^n$:
        \begin{enumerate}
        	\item Let $Q$ be a $(w, 1)$-automaton corresponding to the execution of $M(x)$: each state of $Q$ specifies tape contents and a read head location of $M$, and the transitions of $Q$ correspond to $M$ reading a single random bit. Let the transitions from the accept/reject configurations be self-loops.
        	\item Use the generator of Lemma~\ref{lem:iterate} (with the chosen $\alpha$ value) to deterministically simulate $Q$ by iterating over all seeds and taking a majority vote. Accept or reject accordingly.
        \end{enumerate}
        The value $w$ satisfies $\log w \leq O(\log^{1 + \alpha} n)$, and $Q$ can be produced from $x$ in deterministic space $O(\log^{1 + \alpha} n)$. The space needed for the simulation is $O(\log^{1 + \alpha} w)$, which is $O(\log^{(1 + \alpha)^2} n)$. Therefore, the composition algorithm deterministically decides $A$ in space $O(\log^{1 + 2\alpha + \alpha^2} n)$. Since $\alpha$ was arbitrary and $\lim_{\alpha \to 0} (1 + 2\alpha + \alpha^2) = 1$, this shows that $A \in \bigcap_{\alpha > 0} \mathbf{promise\mhyphen DSPACE}(\log^{1 + \alpha} n)$.
    \end{proof}
	
	\subsection{Transforming targeted PRGs into advice generators, assuming derandomization}
	
	In this section, we finally prove the easier half of Theorem~\ref{thm:targeted-prg-to-advice-generator}, i.e. we prove that targeted pseudorandom generators can be transformed to simulation advice generators under strong derandomization assumptions. This is essentially immediate from the definitions: under strong derandomization assumptions, no advice is needed to simulate automata, so the identity function (padded appropriately) is trivially a simulation advice generator.
	
	\begin{lem} \label{lem:targeted-prg-to-advice-generator-easy-direction}
		If $\mathbf{promise\mhyphen BPL} \subseteq \bigcap_{\alpha > 0} \mathbf{promise\mhyphen DSPACE}(\log^{1 + \alpha} n)$, then for any $\eta, \gamma > 0$, there is a family $\{\mathsf{Gen}_w\}$, where $\mathsf{Gen}_w$ is an efficiently computable $\epsilon$-simulation advice generator for $\mathbf{Q}_{w, 1}^w$ with seed length $s$ satisfying
		\begin{align*}
			&s \leq O(\log^{1 + \eta} w)
			&&\log(1/\epsilon) \geq \Omega(\log^{1 + \eta} w)
			&&&\log a \leq O(\log^{1 + \eta + \gamma} w).
		\end{align*}
	\end{lem}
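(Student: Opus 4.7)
The plan is to take $\mathsf{Gen}_w$ to be the identity map on $\{0,1\}^s$, padded with zeros out to $a$ bits, so that the ``advice'' is just the seed itself (with padding only there to hit the required output-length parameter). All the real work is done by the deterministic simulator $\mathsf{S}$, which is allowed to use $O(d + \log w + \log a) = O(\log^{1+\eta+\gamma} w)$ space. The point of the argument is that under the derandomization hypothesis, this budget is large enough for $\mathsf{S}$ to deterministically solve the promise-BPL problem that underlies sampling from $Q^w(q; U_w)$.

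Concretely, pick $s = \Theta(\log^{1+\eta} w)$ and $\epsilon = 2^{-\log^{1+\eta} w}$, with the constant in $s$ chosen large enough that $w \cdot 2^{-s} \leq \epsilon/2$. Define $\mathsf{S}(Q, q, y)$ to read the prefix $x \in \{0,1\}^s$ of $y$, interpret $x$ as a binary fraction in $[0, 1)$, and, by binary search on $r \in [w]$, output the smallest $r$ such that $\tilde{p}_r \geq x$, where $\tilde{p}_r$ is a $\delta$-approximation to $p_r := \Pr[Q^w(q; U_w) \leq r]$ with $\delta := \epsilon/(4w)$. This is standard inverse-CDF sampling: a union bound over the $w$ ``danger intervals'' of width $2\delta$ plus the discretization loss $w \cdot 2^{-s}$ shows that $\mathsf{S}(Q, q, U_s)$ is $\epsilon$-close to $Q^w(q; U_w)$ in total variation.

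The only remaining task is to show that each comparison $\tilde{p}_r \geq x$ can be implemented in $O(\log^{1+\eta+\gamma} w)$ space. Consider the promise problem: given $(Q, q, r, k)$ padded out to length $n' = 2^{\Theta(\log^{1+\eta} w)}$, decide whether $p_r \geq k 2^{-s} + \delta$ or $p_r \leq k 2^{-s} - \delta$. The natural Monte-Carlo estimator --- sample $y \in \{0,1\}^w$, test whether $Q^w(q; y) \leq r$, and average over $\poly(w/\delta) = \poly(1/\epsilon)$ independent trials --- solves it in $O(\log(w/\delta)) = O(\log^{1+\eta} w) = O(\log n')$ space, placing the problem in $\mathbf{promise\mhyphen BPL}$ on input length $n'$. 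By the hypothesis, for every $\alpha' > 0$ it then lies in $\mathbf{promise\mhyphen DSPACE}(\log^{1+\alpha'} n')$, which is $\mathbf{DSPACE}(\log^{(1+\eta)(1+\alpha')} w)$. Choosing $\alpha'$ small enough that $(1+\eta)(1+\alpha') \leq 1 + \eta + \gamma$ (e.g.\ $\alpha' = \gamma/(2(1+\eta))$) bounds the space of each comparison by $\log^{1+\eta+\gamma} w$, and binary search over $r$ reuses this space across its $O(\log w)$ stages.

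Efficient computability of $\mathsf{Gen}_w$ is immediate since it is just padding. I do not foresee any genuine obstacle: as the author's prose indicates, the direction is essentially a definitional unpacking, and the only care required is bookkeeping the parameters $n'$, $\alpha'$, $\delta$, and $s$ as functions of $\eta$ and $\gamma$, plus verifying the standard inverse-CDF error analysis.
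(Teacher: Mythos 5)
Your proposal is correct and takes essentially the same approach as the paper: take $\mathsf{Gen}_w$ to be the identity on the seed padded out to length $2^{\Theta(\log^{1+\eta+\gamma} w)}$, reduce sampling a final state to a threshold promise problem that lies in $\mathbf{promise\mhyphen BPL}$ on padded inputs of length $2^{\Theta(\log^{1+\eta} w)}$ via Monte-Carlo estimation, and invoke the derandomization hypothesis with a small enough exponent $\alpha'$ to obtain a deterministic decider running in space $O(\log^{1+\eta+\gamma} w)$, which $\mathsf{S}$ calls repeatedly while reusing space. The only (cosmetic) difference is that the paper's $\mathsf{S}$ linearly scans $r \in [w]$ accumulating point-mass estimates $\Delta t$ until the running total exceeds the seed, whereas you binary-search a CDF threshold; both are the same inverse-transform sampling idea with the same parameter bookkeeping.
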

	
	\begin{remark}
		For the purpose of proving Theorem~\ref{thm:targeted-prg-to-advice-generator}, Lemma~\ref{lem:targeted-prg-to-advice-generator-easy-direction} only needed to conclude with Condition~\ref{cond:targeted-prg-to-advice-generator} of the theorem, i.e. a \emph{transformation} from targeted pseudorandom generators to simulation advice generators. But it turns out that under the derandomization assumption of Lemma~\ref{lem:targeted-prg-to-advice-generator-easy-direction}, we can just construct a simulation advice generator ``from scratch.''
	\end{remark}
	
	\begin{remark}
		The derandomization premise of Lemma~\ref{lem:targeted-prg-to-advice-generator-easy-direction} may seem weaker than the derandomization statement in Theorem~\ref{thm:targeted-prg-to-advice-generator} (since it is about $\mathbf{promise\mhyphen BPL}$ instead of $\bigcap_{\alpha > 0} \mathbf{promise\mhyphen BPSPACE}(\log^{1 + \alpha} n)$). This would again make Lemma~\ref{lem:targeted-prg-to-advice-generator-easy-direction} stronger than necessary. But the two derandomization statements are actually equivalent by a padding argument.
	\end{remark}
	
	\begin{proof}[Proof of Lemma~$\ref{lem:targeted-prg-to-advice-generator-easy-direction}$]
		Let $B$ be the following promise problem:
		\begin{itemize}
			\item Input: A $(w, 1)$-automaton $Q$, states $q, r \in [w]$, a positive integer $t < 2^{\lceil \log^{1 + \eta} w \rceil}$, and padding to make the input length $2^{\lceil \log^{1 + \eta} w \rceil}$.
			\item Yes instances: $\Pr[Q^w(q; U_w) = r] \geq (t + 1)/2^{\lceil \log^{1 + \eta} w \rceil}$
			\item No instances: $\Pr[Q^w(q; U_w) = r] \leq (t - 1)/2^{\lceil \log^{1 + \eta} w \rceil}$.
		\end{itemize}
		Then $B \in \mathbf{promise\mhyphen BPL}$. Proof: Simulate $w$ steps of $Q$ from start state $q$ a total of $v$ times, using fresh randomness each time, where
		\[
			v \geq \frac{\ln 6}{2} 2^{2 \lceil \log^{1 + \eta} w \rceil}.
		\]
		Count how many end up in state $r$, and accept if and only if the fraction is at least $t/2^{\lceil \log^{1 + \eta} w \rceil}$. The space required by this algorithm is $O(\log^{1 + \eta} w)$, which is logarithmic in terms of the input length. By Hoeffding's inequality, this algorithm succeeds with probability at least $\frac{2}{3}$.
		
		Therefore, by the premise of the lemma, $B \in \mathbf{promise\mhyphen DSPACE}(\log^{1 + \gamma/(1+ \eta)} n)$, i.e. $B$ can be decided in deterministic space 
		\[
			O(\log^{(1 + \eta)(1 + \gamma/(1 + \eta))} w) = O(\log^{1 + \eta + \gamma} w).
		\]
		Let $\mathsf{Gen}_w: \{0, 1\}^{\lceil \log^{1 + \eta} w \rceil} \to \{0, 1\}^{2^{\lceil \log^{1 + \eta + \gamma} w} \rceil}$ be the identity function padded with zeroes. When the algorithm $\mathsf{S}$ is given $(Q, q, x)$ with $x \in \{0, 1\}^{\lceil \log^{1 + \eta} w \rceil}$ (i.e. discarding the padding), it behaves as follows:
		\begin{enumerate}
			\item Interpret $x$ as an integer in $\{0, \dots, 2^{\lceil \log^{1 + \eta} w \rceil} - 1\}$. Initialize $t = 0$.
			\item For each $r \in [w]$:
			\begin{enumerate}
				\item Find the largest $\Delta t$ such that $(Q, q, r, \Delta t)$ is accepted by the deterministic algorithm that decides $B$ (when the input is padded appropriately.)
				\item Set $t := t + \Delta t$.
				\item If $t \geq x$, output $r$.
			\end{enumerate}
		\end{enumerate}
		The space usage of $\mathsf{S}$ is $O(\log^{1 + \eta + \gamma} w)$ as it should be. Now we analyze the error. The probability of outputting a particular $r \in [w]$ when $x$ is chosen uniformly at random is precisely $\Delta t / 2^{\lceil \log^{1 + \eta} w \rceil}$, where $\Delta t$ is the largest value such that $(Q, q, r, \Delta t)$ is accepted by the algorithm that decides $B$ (when padded appropriately.) By the definition of $B$, this probability is within $2^{-\log^{1 + \eta} w}$ of $\Pr[Q^w(q; U_w) = r]$. Total variation distance is half $L_1$ distance, so the error of the simulator is at most $\epsilon = \frac{1}{2} w 2^{-\log^{1 + \eta} w}$. Hence $\log(1/\epsilon) \geq \Omega(\log^{1 + \eta} w)$.
	\end{proof}
	
	\section{Transforming targeted PRGs into advice generators in the uniform setting} \label{sec:uniform}
	
	The proof of our main result (Theorem~\ref{thm:targeted-prg-to-advice-generator}) is complete; this section can be considered ``optional reading''. In this section, we give an unconditional proof of a uniform statement analogous to Condition~\ref{cond:targeted-prg-to-advice-generator} in Theorem~\ref{thm:targeted-prg-to-advice-generator}. Namely, we show that targeted pseudorandom generators can be transformed into simulation advice generators, as long as we only worry about correctness with respect to sequences of automata that can be generated in logspace.
	
	One might hope that this would lead to an unconditional derandomization of $\mathbf{BPL}$ that is only guaranteed to work for easily-generated inputs. Unfortunately, we are not able to prove such a result: when trying to simulate an easily-generated automaton $Q$ using the SZA transformation, the approximate powers of $Q$ that arise are not so easily generated.
	
	\begin{defn}
		Suppose $((Q_1, q_1), (Q_2, q_2), \dots)$ is a sequence where $Q_w$ is a $(w, 1)$-automaton and $q_w \in [w]$. We say that the sequence is \emph{uniform} if there is some deterministic algorithm that, given $w$, produces $(Q_w, q_w)$ in space $O(\log w)$.
	\end{defn}
	
	\begin{defn}
		We say that $\mathsf{Gen}_w$ is\footnote{Again, strictly speaking, this is a property of a \emph{family} $\{\mathsf{Gen}_w\}$, not an individual generator.} a \emph{targeted $\epsilon$-pseudorandom generator against $\mathbf{Q}_{w, 1}^{m}$ in the uniform setting} if $\mathsf{Gen}_w$ is a targeted $\epsilon$-pseudorandom generator against $\mathbf{A}_w \subseteq \mathbf{Q}_{w, 1}^{m}$ such that for every uniform sequence $((Q_1, q_1), (Q_2, q_1), \dots)$, for all sufficiently large $w$, the element of $\mathbf{Q}_{w, 1}^{m}$ specified by $(Q_w, q_w)$ is an element of $\mathbf{A}_w$. We similarly define what it means for $\mathsf{Gen}_w$ to be an \emph{$\epsilon$-simulation advice generator for $\mathbf{Q}_{w, 1}^{m}$ in the uniform setting}.
	\end{defn}
	
	\begin{prop} \label{prop:uniform}
		For any constant $\mu \in [0, 1]$ and for any constants $\sigma > \eta > 0$, the following holds. Suppose there is a family $\{\mathsf{Gen}_w\}$, where $\mathsf{Gen}_w$ is an efficiently computable targeted $\epsilon$-pseudorandom generator against $\mathbf{Q}_{w, 1}^m$ in the uniform setting with seed length $s$ satisfying
		\begin{align*}
		&s \leq O(\log^{1 + \sigma} w),
		&&\log(1/\epsilon) = \log^{1 + \eta} w,
		&&&\log m \geq \log^{\mu} w.
		\end{align*}
		Then there is another family $\{\mathsf{Gen}'_w\}$, where $\mathsf{Gen}'_w$ is an efficiently computable $\epsilon$-simulation advice generator for $\mathbf{Q}_{w, 1}^m$ in the uniform setting with seed length $s$ and output length $a' \leq \poly(w)$.
	\end{prop}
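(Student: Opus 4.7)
My plan is to build $\mathsf{Gen}'_w$ by having it emit, for every candidate logspace uniform algorithm whose description is short relative to $w$, the automaton that the algorithm produces on input $w$ paired with the final state produced by the implied simulator. Concretely, I would fix a standard encoding of deterministic Turing machines and consider the family of all TMs whose description fits in $\lceil \log w \rceil$ bits, run with a work-tape space bound of $\lceil \log^{1 + \sigma} w \rceil$. There are $\poly(w)$ such machines, and simulating one of them up to its space bound takes $O(\log^{1 + \sigma} w) = O(s)$ space. On seed $x$, $\mathsf{Gen}'_w$ would iterate over this family: for each candidate $M$, simulate $M$ on input $w$; if $M$ halts within budget and produces a valid $(w, 1)$-automaton $Q^M$ together with a start state $q^M \in [w]$, then compute
\[
s^M(x) \;\defeq\; (Q^M)^m\bigl(q^M;\, \mathsf{Gen}_w(Q^M, q^M, x)\bigr) \;\in\; [w]
\]
using the targeted PRG (in $O(s)$ further space), and append the tuple $((Q^M, q^M), s^M(x))$ to the advice tape before erasing workspace and moving on. Each entry is $O(w \log w)$ bits and there are $\poly(w)$ of them, so the total advice length is $\poly(w)$ and $\mathsf{Gen}'_w$ fits in $O(s)$ space, as the proposition requires.

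The associated logspace algorithm $\mathsf{S}$ would simply scan its advice entry by entry, using an $O(\log w)$-bit index and an $O(\log w)$-bit position pointer to compare its input $(Q, q)$ against the first component of each stored pair bit by bit; on finding a match it outputs the paired state. Let $\mathbf{A}_w$ be the set of $(Q, q)$ that appear as the first component of some list entry. Then $\mathsf{S}(Q, q, \mathsf{Gen}'_w(U_s))$ equals $s^M(U_s)$ for some $M$ with $(Q^M, q^M) = (Q, q)$, which by the definitional identity is $Q^m(q;\, \mathsf{Gen}_w(Q, q, U_s))$. By the hypothesis that $\mathsf{Gen}_w$ is a targeted $\epsilon$-pseudorandom generator against a set containing $(Q, q)$, this distribution is $\epsilon$-close in total variation to $Q^m(q; U_m)$, so $\mathsf{Gen}'_w$ is an $\epsilon$-simulation advice generator for $\mathbf{A}_w$.

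Finally, I would verify that $\mathbf{A}_w$ contains every uniformly generated $(Q_w, q_w)$ for all sufficiently large $w$. Any fixed uniform algorithm $\mathcal{A}$ has some constant description length $d_{\mathcal{A}}$ and uses at most $c_{\mathcal{A}} \log w$ work-tape bits on input $w$; as soon as $w$ is large enough that $d_{\mathcal{A}} \leq \log w$ and $c_{\mathcal{A}} \log w \leq \log^{1 + \sigma} w$, $\mathcal{A}$ is one of the enumerated machines, so $(Q_w^{\mathcal{A}}, q_w^{\mathcal{A}})$ is among the first components of the list. The main obstacle I anticipate is purely bookkeeping: verifying that $\mathsf{Gen}'_w$ stays within its $O(s)$ space budget while simultaneously iterating over candidate TMs, simulating each one, running $\mathsf{Gen}_w$ and the subsequent $Q^m$ computation, and streaming out entries, and that $\mathsf{S}$'s input-length-logarithmic space truly suffices for the comparison loop. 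Both are routine.
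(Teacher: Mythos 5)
Your proposal is correct and takes essentially the same route as the paper: enumerate all programs of description length at most $\log w$ with a small self-imposed space bound, run each on input $w$ to obtain candidate pairs $(Q, q)$, place targeted-PRG-derived data for each such pair on the advice tape, and let $\mathsf{S}$ perform a table lookup, with the uniform-setting guarantee following because any fixed uniform sequence's generating program is eventually enumerated. The only (harmless) difference is that the paper stores the string $\mathsf{Gen}_w(Q, q, x)$ itself and has $\mathsf{S}$ run $Q$ on it, whereas you precompute the final state $Q^m(q; \mathsf{Gen}_w(Q, q, x))$ inside $\mathsf{Gen}'_w$ and store just that state.
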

	
	The proof of Proposition~\ref{prop:uniform} is simple: the simulation advice is just a list of pseudorandom strings for particular $(Q, q)$ pairs. The length of the list is small, but $\omega(1)$, and constructed in such a way that for any uniform sequence $((Q_1, q_1), (Q_2, q_2), \dots)$, for sufficiently large $w$, the advice includes a pseudorandom string for $(Q_w, q_w)$.
	
	\begin{proof}
		$\mathsf{Gen}'_w$ behaves as follows, given seed $x$:
		\begin{enumerate}
			\item For all programs $P$ of length at most $\log w$ that on input $w$ have an explicit self-imposed $O(\log w)$ space bound:
			\begin{enumerate}
				\item Run $P(w)$. If it produces a pair $(Q, q)$ where $Q$ is a $(w, 1)$-automaton and $q \in [w]$, then print $(Q, q, \mathsf{Gen}(Q, q, x))$.
			\end{enumerate}
		\end{enumerate}
		This generator clearly uses space $O(\log^{1 + \sigma} w)$, has seed length $s$, and has output length $a \leq \poly(w)$. The corresponding algorthm $\mathsf{S}$ behaves as follows, given $(Q, q, y)$ where $y$ is the output of $\mathsf{Gen}'_w$:
		\begin{enumerate}
			\item If, for some $z$, the triple $(Q, q, z)$ appears in $y$, then output $Q^{|z|}(q; z)$. Otherwise output $1$.
		\end{enumerate}
		This algorithm clearly runs in logspace. We will show that it is an $\epsilon$-simulator for $\mathbf{Q}_{w, 1}^m$ in the uniform setting. Indeed, suppose $((Q_1, q_1), (Q_2, q_2), \dots)$ is uniform via some program $P$. Then for all sufficiently large $w$, $\mathsf{Gen}_w$ works against $(Q_w, q_w)$. Furthermore, when $w \geq 2^{|P|}$, the algorithm for $\mathsf{Gen}'_w$ will consider $P$, and hence its output will include the triple $(Q_w, q_w, \mathsf{Gen}_w(Q_w, q_w, x))$. Therefore, for such $w$, the simulator will give an output that is $\epsilon$-close to $Q_w^m(q_w; U_m)$.
	\end{proof}
	
	\section{Acknowledgments}
	The first author is supported by the National Science Foundation Graduate Research Fellowship under Grant No. DGE-1610403. The second author is supported by National Science Foundation Grant No. CCF-1423544 and by a Simons Investigator grant.
	
	\bibliographystyle{alpha}
	\bibliography{iterated-sza}
\end{document}